\documentclass[a4paper]{article}
\usepackage{amsmath, amssymb, amsthm, bm, hyperref, graphicx, listings, booktabs, xcolor, cleveref, thm-restate, subcaption}

\usepackage[margin=1in]{geometry}

\bibliographystyle{plainurl}




\definecolor{ZP}{HTML}{FF8C00}

\bibliographystyle{plainurl}

\title{Modeling Loss-Versus-Rebalancing in Automated Market Makers via Continuous-Installment Options}

\newtheorem{theorem}{Theorem}
\newtheorem{lemma}[theorem]{Lemma}

\newcommand{\LVR}{\mathrm{LVR}}
\newcommand{\Fee}{\mathrm{Fee}}
\newcommand{\dd}{\,\mathrm d}
\newcommand{\EE}{\mathbb E}
\newcommand{\QQ}{\mathbb Q}
\newcommand{\RR}{\mathbb R}

\DeclareMathOperator{\Var}{Var}
\newcommand{\bb}[1]{\left(#1\right)}
\newcommand{\bs}[1]{\left\{#1\right\}}
\newcommand{\calF}{\mathcal{F}}


\author{
Srisht Fateh Singh\\
University of Toronto, Canada\\
\href{mailto:srishtfateh.singh@mail.utoronto.ca}{\tt srishtfateh.singh@mail.utoronto.ca}
\and
Reina Ke Xin Li\\
University of Toronto, Canada\\
\href{mailto:reinakx.li@mail.utoronto.ca}{\tt reinakx.li@mail.utoronto.ca}
\and
Samuel Gaskin\\
University of Toronto, Canada\\
\href{mailto:samuel.gaskin@nbc.ca}{\tt samuel.gaskin@nbc.ca}
\and
Yuntao Wu\\
University of Toronto, Canada\\
\href{mailto:winstonyt.wu@mail.utoronto.ca}{\tt winstonyt.wu@mail.utoronto.ca}
\and
Jeffrey Klinck\\
University of Toronto, Canada\\
\href{mailto:jeffrey.klinck@mail.utoronto.ca}{\tt jeffrey.klinck@mail.utoronto.ca}
\and
Panagiotis Michalopoulos\\
University of Toronto, Canada\\
\href{mailto:p.michalopoulos@mail.utoronto.ca}{\tt p.michalopoulos@mail.utoronto.ca}
\and
Zissis Poulos\\
York University, Canada\\
\href{mailto:zpoulos@yorku.ca}{\tt zpoulos@yorku.ca}
\and
Andreas Veneris\\
University of Toronto, Canada\\
\href{mailto:veneris@eecg.toronto.edu}{\tt veneris@eecg.toronto.edu}
}

\begin{document}

\maketitle

\begin{abstract}
This paper mathematically models a constant-function automated market maker (CFAMM) position as a portfolio of exotic options, known as perpetual American continuous-installment (CI) options. This model replicates an AMM position's delta at each point in time over an infinite time horizon, thus taking into account the perpetual nature and optionality to withdraw of liquidity provision. This framework yields two key theoretical results: (a) It proves that the AMM's adverse-selection cost, loss-versus-rebalancing (LVR), is analytically identical to the continuous funding fees (the time value decay or theta) earned by the at-the-money CI option embedded in the replicating portfolio. (b) A special case of this model derives an AMM liquidity position's delta profile and boundaries that suffer approximately constant LVR, up to a bounded residual error, over an arbitrarily long forward window. Finally, the paper describes how the constant volatility parameter required by the perpetual option can be calibrated from the term structure of implied volatilities and estimates the errors for both implied volatility calibration and LVR residual error. Thus, this work provides a practical framework enabling liquidity providers to choose an AMM liquidity profile and price boundaries for an arbitrarily long, forward-looking time window where they can expect an approximately constant, price-independent LVR. The results establish a rigorous option-theoretic interpretation of AMMs and their LVR, and provide actionable guidance for liquidity providers in estimating future adverse-selection costs and optimizing position parameters.
\end{abstract}

\section{Introduction}\label{sec:intro}
The success of blockchains supporting smart contract such as Ethereum~\cite{buterin2013ethereum}, Solana~\cite{yakovenko2018solana}, etc., has led to the rise of {\em Decentralized Finance} (DeFi) which offers alternatives to traditional financial services by removing central trusted intermediaries and replacing them with public, verifiable, and immutable computer programs. One of the pivotal components of the DeFi infrastructure stack is automated market makers, or AMMs, allowing the exchange of one token for another at prices decided by an underlying algorithm. In recent years, AMMs have seen a rapid adoption reflected in financial metrics such as total value locked (above $\$21B$) and yearly transaction volumes (above $\$2T$), as well as by their composability~\cite{defillama_dexs,dune_dex_metrics}. Today, tens of thousands of tokens are listed and hundreds of applications are built on top of them~\cite{numberOfUniPools}.

The key participants in an AMM are traders and agents known as \emph{liquidity providers}, or LPs. Traders exchange one token for another, where the token pair generally consists of a risky asset with volatile value, and a stable asset or numéraire.\footnote{Although some token pairs consist of two stable assets, this work focuses primarily on pairs of one risky and one stable asset.} On the other hand, LPs serve as counterparties to traders (sellers to buyers and buyers to sellers) by depositing both tokens upfront to the exchange. As a result of each trade, LPs receive the less favourable of the two tokens. To hedge against the adverse selection faced by AMMs, LPs can continuously rebalance an off-chain replicating portfolio by accumulating the risky token as its price rises and selling it when the price falls. However, this hedge is not perfect when the AMM is not the primary venue for price discovery because the pool’s quoted price tends to lag behind the prevailing price on a primary venue such as a centralized exchange.

Under such a setting, even proactive LPs who rebalance in response to price changes are exposed to a systematic cost. Since AMM quotes lag those on primary markets, arbitrageurs can act faster than LPs and restore the pool price to the external market level. This generates a small but persistent transfer of value from LPs to arbitrageurs, and when aggregated over multiple price updates, this cost becomes significant. This cost is referred to as \emph{loss-versus-rebalancing} (LVR)~\cite{milionis2022automated}. The rate at which LVR accumulates depends on the steepness of the AMM curve and the volatility of the underlying token; both amplify the arbitrage gap and thus accelerate LVR.

Despite such costs, LPs are incentivized to participate through the earning of trading fees that are proportional to the value of each trade and paid by the trader. Therefore, LPs considering whether to provide liquidity on an AMM pool must calculate their expected \emph{payoff} by estimating and comparing their position's LVR with the anticipated trading fees in some predetermined forward time window. Recent work~\cite{milionis2022automated} have analyzed and quantified expressions for instantaneous LVR and retrospectively tested with historic market data. However, there is limited work that provides estimation methods for future LVR. 

This work estimates the LVR for an LP that decides to provide liquidity for an arbitrarily long period and can exit at any point. It does so by mathematically modeling liquidity provision on a general class of AMMs, known as constant function AMMs (CFAMMs), as selling a continuum of perpetual American put options across continuous strikes. Perpetual American options are financial derivatives that give their holder the right to buy (known as a call) or sell (known as a put) an underlying asset at an agreed-upon price (strike) with no expiration. Unlike traditional vanilla options, this work uses exotic options in its model, known as continuous installment (CI) options, in which the holder must pay a stream of constant installment rate, referred to as \emph{funding fees}, to keep their position alive. This funding fee is analogous to the time value decay of traditional fixed-term options. The paper uses perpetual American CI options because, unlike fixed-term vanilla options, the pricing function of these options does not change over time (assuming other market parameters are constant). Moreover, despite their exotic nature, CI options have been well-studied in the past, and this work builds on results from the existing literature~\cite{ciurlia2009note}. This approach yields two key theoretical results.

\noindent \textbf{Funding Fees = LVR:} In the limit where the installment rate tends to infinity (analogous to extremely short-dated fixed-term options), a CI put option has the classic hockey-stick payoff function: it pays the difference between the strike and the spot price when the underlying's spot price is below the strike price, and zero otherwise. Therefore, in this limit, a continuous distribution of CI puts \emph{exists} that delta-replicates an arbitrary LP position's payoff \emph{at each point in time}. Moreover, the installment rate earned on this distribution of puts (which, at a given point in time, is earned by the option whose strike equals the spot price at that time) reproduces the expression for the instantaneous LVR. Therefore, theoretically, an LP can hold such an options portfolio to stay delta-neutral---the cost due to time value decay of holding this portfolio, \emph{i.e.}, funding fees, is the LVR.

Since the above model is theoretical, as a continuum of options cannot be reproduced in practice, the paper subsequently quantifies the approximation error when a discrete portfolio of options with discrete strike prices is used to replicate the payoff of a liquidity position. 

\noindent\textbf{Constant Future LVR:} The second result analyzes the converse scenario where a liquidity position's payoff replicates the valuation of a single perpetual American CI option. This produces a unique liquidity profile with \emph{almost constant} instantaneous LVR rate over a forward time window. Moreover, this rate is approximately equal to the funding fee of the replicated CI option. As a result, this yields guidance to LPs, under the model assumptions, on:
\begin{itemize}
    \item Choosing the price boundaries and shape of liquidity provision that incurs \emph{predictable, flat, price-path independent} future LVR.
    \item Estimating the forward adverse-selection cost for a planned holding period.
    \item Understanding the relationship between the optimal holding period and the width of the liquidity position.
    \item Selecting an appropriate pool based on its expected future trading fee income.
\end{itemize}

The paper is organised as follows. Section~\ref{sec:background} explains notations and the necessary background, Section~\ref{sec:related-work} discusses prior literature and related works, Section~\ref{sec:motivation} motivates the option-based interpretation, Section~\ref{sec:kernel-replication} provides the options decomposition, Section~\ref{sec:LVR-from-CI} proves the funding-fee--LVR identities, Section~\ref{sec:error-analysis-delta} measures the approximation error on delta when the continuous strip is replaced by finitely many strikes, and Section~\ref{sec:sigma-calibration-put} presents the volatility calibration and design rules for LPs. Finally, Section~\ref{sec:conclusion} concludes with directions for future research.

\section{Background}\label{sec:background}
In this section, we provide the necessary notation, terminology and background concepts used in the remainder of the paper.
\subsection{Notation}\label{sec:notation}
We consider two tokens: $token\;0$ representing a risky asset (\emph{e.g.}~BTC, ETH) and $token\; 1$ representing a stable/safe asset (\emph{e.g.}~USDC). Let $S_t$ denote the spot price of $token\;0$ in units of $token\; 1$ at time $t$ that
follows a geometric Brownian motion (GBM) on a filtered probability space $\bb{\Omega, \calF, \bs{\calF_t}_{t\geq 0}, \QQ}$ satisfying the standard assumptions for GBMs (where $\QQ$ is a risk-neutral probability measure), so that
\begin{equation}\label{eq:gbm}
  \frac{\dd S_t}{S_t}=r\,\dd t+\sigma\,\dd B_t^{\QQ},
\end{equation}
with constant annual risk-free rate~$r$ and volatility~$\sigma>0$. $\{B^{\mathbb{Q}}_t\}_{t\ge0}$ is a Wiener process. As usual, we assume that AMMs constitute secondary markets and the price $S_t$ is governed by primary markets such as centralized exchanges. In the subsequent sections, we omit the subscript $t$ and use $S$ and $S_t$ interchangeably for convenience. 

\subsection{Constant-Function Automated Market Makers (CFAMMs)}\label{sec:cfamm}
A \emph{constant-function automated market maker} (CFAMM) maintains token reserves $(x,y)$, deposited by LPs, such that each trade transforms the reserves to $(x', y')$ and the reserves before and after the trade satisfy an invariant function $F(x,y)=F(x',y') = k$. In a \emph{constant-product} AMM (CPAMM), such as Uniswap v2~\cite{uniswapv2}, the invariant takes the form of $F(x,y)=\sqrt{xy}$. As a result, the marginal exchange price, assuming no arbitrage, takes the form: $S = -\frac{\partial y}{\partial x} = \frac{y}{x}$. This is shown in Figure~\ref{fig:cpamm-invariant}. Therefore, the expression for token reserves are $x = \frac{k}{\sqrt{S}}$, and $y = k\sqrt{S}$.

A CPAMM with concentrated liquidity (as in Uniswap~v3~\cite{uniswapv3}) uses an invariant parameter~$k$ within a price band $[a, b]$. The token reserves of LPs in this band consist of only \emph{token~0} when the price $S \geq b$, only \emph{token~1} when $S \leq a$, and for prices $S \in (a, b)$, the reserves are:
\begin{align}
    x = k\frac{\sqrt{b} - \sqrt{S}}{\sqrt{Sb}},\qquad y = k\left(\sqrt{S} - \sqrt{a}\right).
\end{align}
Therefore, the reserve value (denominated in token~1) is
\begin{equation}\label{eq:V}
  V(S) = k\left(\sqrt{S} - \sqrt{a}\right) + kS\,\frac{\sqrt{b} - \sqrt{S}}{\sqrt{Sb}}, \qquad S \in [a, b].
\end{equation}
The liquidity position's sensitivity to price, known as it's \emph{delta}, is denoted by $X(S)$, and the sensitivity of delta to price, known as \emph{gamma}, is denoted by $\Gamma(S)$. In practice, delta is a measure of exposure to small changes in the price of the risky asset, whereas gamma is a measure of exposure to large movements of the risky asset's price. These are given by the first and second derivatives of the value function $V(S)$ with respect to the underlying's price, respectively, and are expressed as follows:
\begin{equation}\label{eq:delta-gamma}
  X(S) := V'(S),\qquad \Gamma(S) := V''(S) = X'(S) \le 0 \quad (S \in (a, b)).
\end{equation}
Figure~\ref{fig:delta-gamma} illustrates the behavior of delta and gamma across the liquidity band. As shown, the magnitudes of both delta and gamma decrease monotonically with price, as higher prices correspond to a greater allocation to the numeraire asset.

\begin{figure}[t]
  \centering
  \begin{subfigure}[b]{0.5\textwidth}
    \centering
    \includegraphics[width=\textwidth]{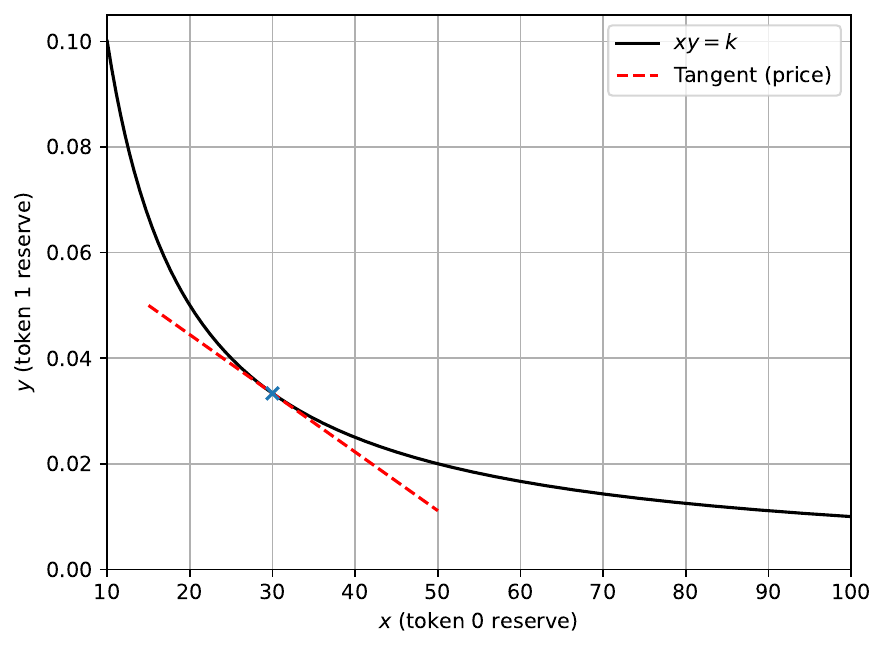}
    \caption{CPAMM Invariant with $k=1$ and Marginal Price (Tangent at $x=30$)}
    \label{fig:cpamm-invariant}
  \end{subfigure}
  \hfill
  \begin{subfigure}[b]{0.46\textwidth}
    \centering
    \includegraphics[width=\textwidth]{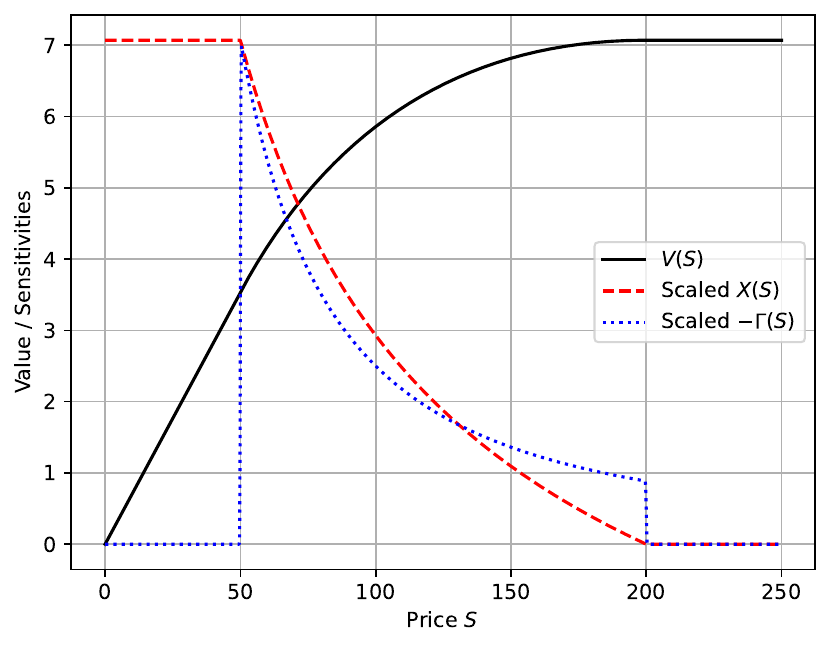}
    \caption{$V(S)$, $100X(S)$, and $-5000\Gamma(S)$ for concentrated liquidity position in price band $[50,200]$, and $k = 1$.}
    \label{fig:delta-gamma}
  \end{subfigure}
  \caption{Profiling CPAMM invariant and concentrated liquidity position, its delta and gamma.}
  \label{fig:CPAMM-profile}
\end{figure}

\subsubsection{Loss-Versus-Rebalancing (LVR)}
A continuously rebalanced, self-financing delta-hedge portfolio that holds $X(S_t)$ units of token~0 has value $W_t$ with $\dd W_t=X(S_t)\dd S_t$.  The difference
\begin{equation}
  \mathrm{LVR}_t:=V(S_t)-W_t
\end{equation}
quantifies the AMM's adverse-selection loss relative to a hedged trader and is referred to as loss-versus-rebalancing or LVR. The instantaneous LVR, $\dd\mathrm{LVR}_t$, grows quadratically with the spot price and volatility, and linearly with the gamma of the liquidity position~\cite{milionis2022automated}. 
\begin{equation}\label{eq:dLVR}
  \dd\mathrm{LVR}_t=\tfrac12\,\sigma^{2}S_t^{2}\Gamma(S_t)\,\dd t=\tfrac12\,\sigma^{2}S_t^{2}\bigl[X'(S_t)\bigr]\dd t.
\end{equation}

We will later demonstrate the equivalence between the right-hand side of Eq.~\eqref{eq:dLVR} and the funding fee of a perpetual American CI option.
\subsection{Perpetual American Continuous-Installment Options}\label{sec:ci-options}

A \emph{perpetual American continuous-installment put option} has no expiration date and requires the holder to pay a continuous stream of \emph{constant funding fee}~$q > rK$ per year to keep the contract alive. At any point, the holder may choose to stop paying the fee, at which point they can either exercise the option or drop the position. The option can be exercised at any time for a payoff of $\max(K - S, 0)$~\cite{ciurlia2009note}. In the analysis below, we assume the underlying asset (token~0) pays zero dividends.

\subsubsection{Notation and Ordinary Differential Equation Formulation}\label{sec:ode-options}
Let $P_q(S;K)$ denote the discounted put option value at spot price $S$, strike $K$, and funding fee $q$. Let $S_\ell$ denote the lower boundary, below which the option value equals its payoff, and let $S_u$ denote the upper boundary, above which the option value is zero, as illustrated in Figure~\ref{fig:ci-curve}.
Under the risk–neutral dynamics, $P_q(S;K)$ satisfies the inhomogeneous Black-Scholes ordinary differential equation in the \emph{continuation region} $S_\ell< S < S_u$:
\begin{equation}\label{eq:CI-ODE}
  \frac{1}{2}\,\sigma^{2}S^{2}\,
     \frac{\partial^{2}P_q}{\partial S^{2}}
  + rS\,
     \frac{\partial P_q}{\partial S}
  - r\,P_q
  \;=\; q,
  \qquad S\in(S_\ell,S_u).
\end{equation}
The left and right boundaries are determined endogenously from the value–matching and delta-matching conditions
\begin{equation}\label{eq:boundaries-FBP}
  \begin{aligned}
    P_q(S_\ell;K) &= K - S_\ell,  
    &\qquad \frac{\partial P_q}{\partial S}(S_\ell;K) &= -1, \\[4pt]
    P_q(S_u;K)    &= 0,
    &\qquad \frac{\partial P_q}{\partial S}(S_u;K)    &= 0.
  \end{aligned}
\end{equation}
Solving \eqref{eq:CI-ODE} with the four boundary conditions in
\eqref{eq:boundaries-FBP} yields the closed-form expressions below.

\begin{figure}[t]
  \centering
  \begin{subfigure}[b]{0.48\textwidth}
    \centering
    \includegraphics[width=\textwidth]{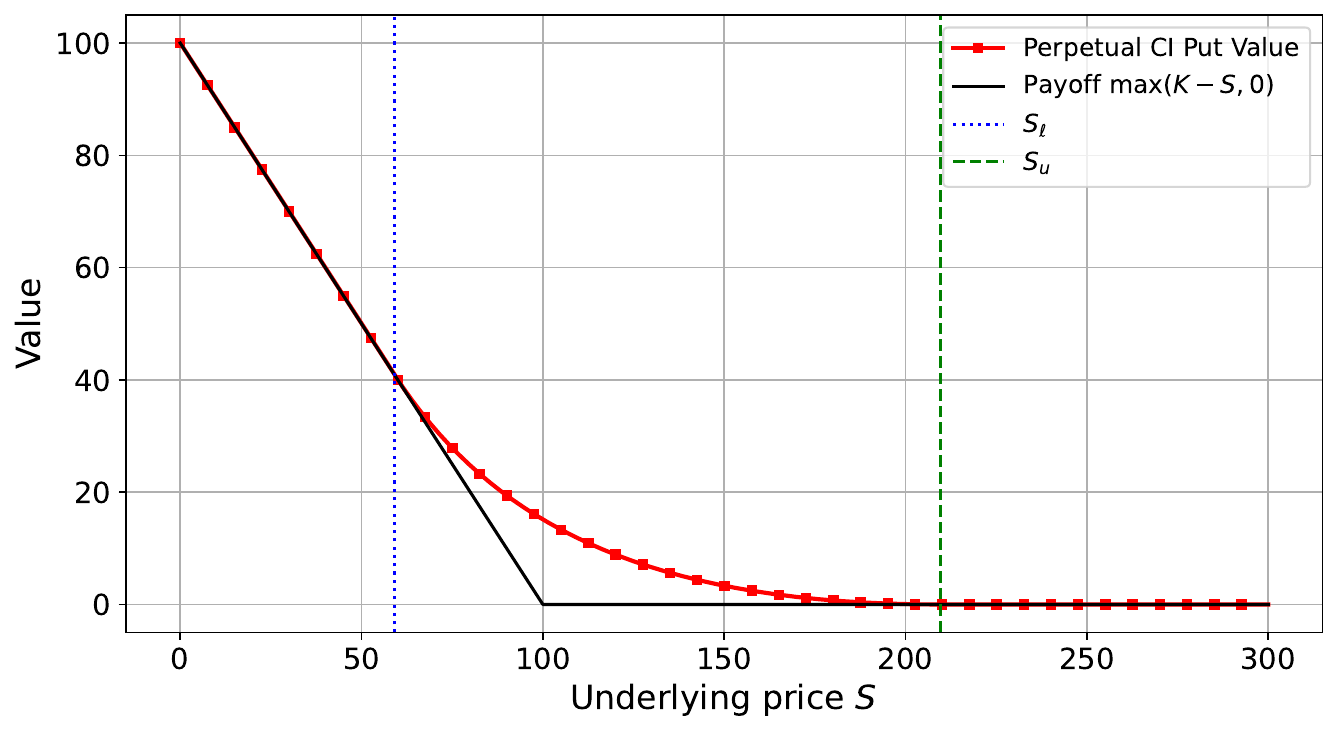}
    \caption{Discounted value of a perpetual CI put option with lower and upper boundaries.}
    \label{fig:ci-curve}
  \end{subfigure}
  \hfill
  \begin{subfigure}[b]{0.48\textwidth}
    \centering
    \includegraphics[width=\textwidth]{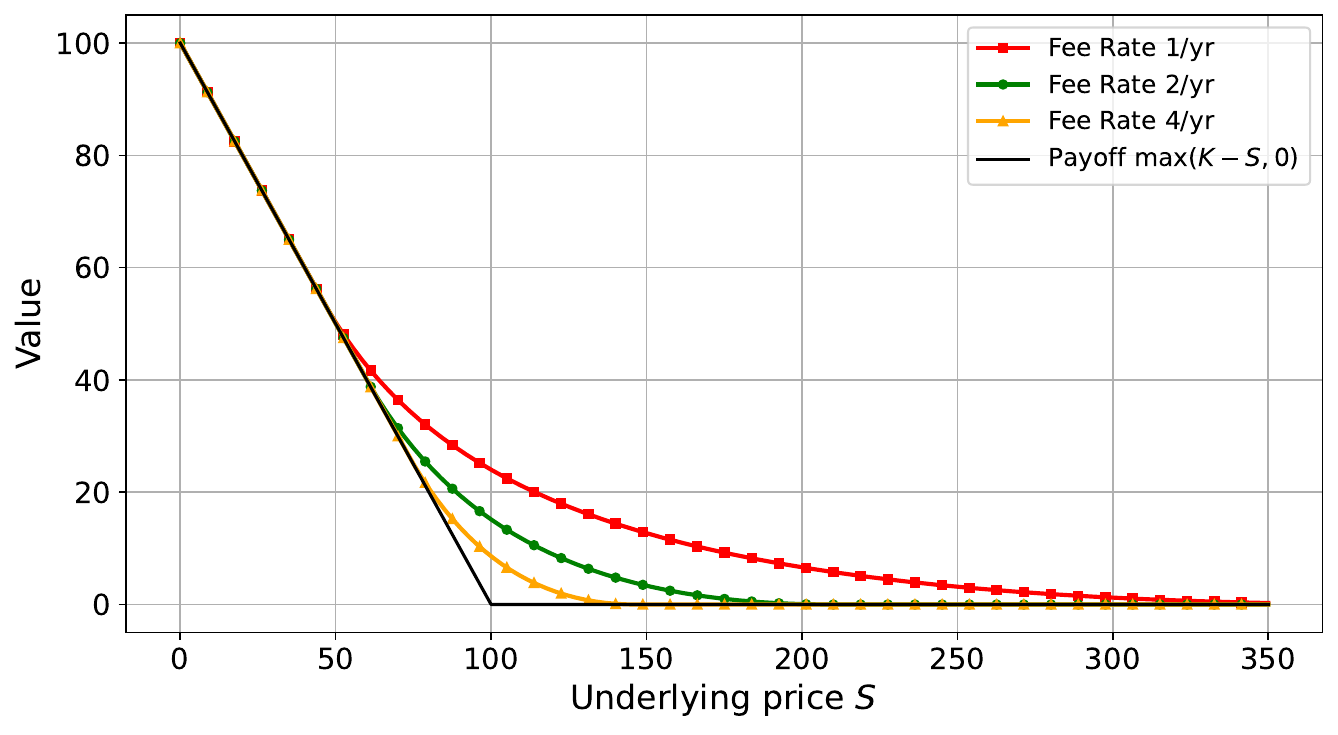}
    \caption{Discounted value of a perpetual CI put option with fee rates $1$, $2$, $4$ per annum.}
    \label{fig:ci-curve-q}
  \end{subfigure}
  \caption{Profiling of perpetual CI put option with $K = 100.0$, 
$\sigma = 0.25$, $q=2$ (in the first figure) and $r = 0.01$ (all parameters are annualized).}
  \label{fig:CI-option-profile}
\end{figure}
\subsubsection{Closed-form solution}
The expression for the option price $P_{q}(S;K)$ takes the following closed form as derived in~\cite{ciurlia2009note}.
\begin{equation}\label{eq:Pq}
    P_q(S;K)=\alpha_pS+\beta_pS^{\gamma_p}+\frac{q}{r}
\end{equation}
The delta of the put option, $X_q(S;K)= \frac{\partial}{\partial S} P_q(S;K)$, thus takes the form:
\begin{equation}\label{eq:DeltaCIput}
  X_q(S;K)=\alpha_p+\beta_p\gamma_pS^{\gamma_p-1}.
\end{equation}
Moreover, the upper and lower boundaries, $S_{u}$ and $S_{\ell}$ respectively, have the following closed form:
\begin{align}
  S_\ell &= \frac{q}{r+\sigma^{2}/2} \bigl[g - g^{1/\gamma_p}\bigr], \label{eq:boundary_lower}\\
  S_u    &= \frac{q}{r+\sigma^{2}/2} \bigl[g^{1-1/\gamma_p} - 1\bigr].\label{eq:boundary_upper}
\end{align}
Here, $\alpha_p$, $\beta_p$, $\gamma_p$, and $g$ are expressions that depend on parameters $r$, $\sigma$, $K$, and $q$ and their expressions are provided in Appendix~\eqref{app:closed-form}.
 
The Black–Scholes partial differential equation in~\eqref{eq:CI-ODE} contains no derivative with respect to time. Consequently, the value of a perpetual CI option is \emph{time-invariant}. This makes it unique from vanilla finite expiry American or European options, whose pricing has a time-varying component. The analog of expiration in CI options is the funding rate, where  high funding rates make it ``behave similar'' to short expiration option and vice versa for small fee rates. This is depicted in Figure~\ref{fig:ci-curve-q}, where increasing the fee rate reduces the price of the CI option closer to its payoff. The lower and upper boundaries, \( S_\ell \) and \( S_u \), characterize the holder’s optimal policy. When the spot price first falls below \( S_\ell \), it is optimal to \emph{exercise} the option; when it first exceeds \( S_u \), it is optimal to \emph{drop} the option---\emph{i.e.}, to exit the position with zero payoff. This is because, in both cases, the expected benefit of continued funding is outweighed by its cost. In either scenario, the holder stops paying the funding fee immediately upon exit. For this reason, \(S_\ell\) and \(S_u\) are also called \emph{optimal exercise} and \emph{dropping boundaries}, respectively. Conversely, the option seller receives the continuous funding fee only while the spot price remains in the continuation region \(S_\ell < S < S_u\).  We will exploit this fact in the sections that follow.

\section{Related Work}\label{sec:related-work}
Early work on studying LP positions in CFAMMs focus on mitigating risks associated with impermanent loss---the loss experienced by a liquidity provider compared to simply holding the asset, so that an LP can earn trading fees without exposure to this loss. Deng et al.~\cite{deng2023static}  and Fukasawa et al.~\cite{fukasawa2023weighted} study static replication strategies for impermanent loss in finite-time CFAMM positions using European options and variance swaps, respectively. Lipton et al.~\cite{lipton2023unified} further studies model-based dynamic replication. However, it can be argued that a comparison against a buy-and-hold strategy is insufficient as it does not account for adverse selection, where informed traders extract value from passive LPs over time.

Another approach to quantifying LP loss is the concept of LVR, which captures the loss incurred by a liquidity provider compared to continuously rebalancing their portfolio at market prices, due to adverse selection. LVR was formalized by Millionis et al.~\cite{milionis2022automated}. Their work offers closed-form expressions for instantaneous LVR in a CFAMM and provides empirical validation using historical market data. However, they focus on instantaneous and historical LVR and do not address forward-looking or long-term LVR estimation. Meanwhile, our approach provides a framework that captures forward-looking LVR by delta-replication with options portfolios.

Maire and Wunsch~\cite{maire2022market} make a case for the hedging of the LP position value instead of the impermanent loss. They study the problem of market-neutral liquidity provision by constructing a static replication portfolio that matches the AMM position's dollar value over time, effectively achieving a constant value position for the finite lifetime of the position. The replicating portfolio's margin requirement is then itself hedged by shorting a perpetual or dated futures contract to offset changes in the margin value, enabling a market-neutral LP strategy that generates interest from LP trading fees and futures funding fees. Meanwhile, Clark~\cite{clark2020replicating,clark2021replicating} investigates the replicating portfolio of the payoff of a constant product AMM position and shows how an LP position's terminal value can be fully statically replicated using a portfolio of European options. The approach focuses on fixed, finite time horizons and seeks to hedge only the terminal value of the liquidity position. On the other hand, our approach neither relies on dynamic hedging, nor assumes finite time horizons. Instead, we model the LP's position over an indefinite time horizon using a portfolio of perpetual American CI put options, providing a theoretical framework that statically captures the path-dependency of forward-looking LVR.

\section{Work Motivation}\label{sec:motivation}

The value profile of a concentrated liquidity position in a finite price band, as shown in Figure~\ref{fig:delta-gamma}, closely resembles that of a portfolio consisting of cash (a constant payoff) and short put options (the negative of the payoff shown in Figure~\ref{fig:ci-curve}): flat on one wing, linear on the other, and smoothly curved in between. Figure~\ref{fig:cpamm-vs-call} makes this visual similarity precise by comparing the \emph{value} of a CPAMM position with $k=1$ and band $(80,125)$ to a portfolio comprising cash and a one-month European put, with strike at the geometric mean of the price boundaries. The put is valued using the standard Black–Scholes model~\cite{black1973pricing}.

Despite the superficial similarity, key differences emerge. As shown in Figure~\ref{fig:diff-call-cpamm}, the two profiles diverge meaningfully. More critically, the European option’s value is inherently time-variant: even a single day’s passage erodes its time value (\emph{theta}), while the CPAMM’s value remains time-stationary. This contrast is illustrated in Figure~\ref{fig:theta}, where the solid line represents the option value as a function of the time-to-maturity. In addition, the CPAMM offers a flexible, perpetual holding period, whereas fixed-term options require periodic rolling---selling expiring contracts and buying new ones---to replicate a liquidity position.

These challenges raise a natural question: \emph{Can one construct a static portfolio that tracks a perpetual AMM band \emph{without} daily rebalancing?} The answer is affirmative, provided we replace European options with a class of \emph{American continuous-installment options}. This is because the ongoing constant funding fee, $q\;\mathrm{d}t$, charged by an active CI option offsets the time decay found in European options. In the limit of infinite maturity (the perpetual CI variant), the mark-to-market value of a CI put or call becomes \emph{time-invariant}, producing a flat line as shown by the dashed line in Figure~\ref{fig:theta}.

This observation enables a decomposition of a CFAMM band into a \emph{perpetual strip} of CI puts that both matches the pool’s delta and offers stationarity. Beyond its conceptual appeal, this decomposition yields two practical insights:
\begin{enumerate}
  \item The instantaneous \emph{funding fee} of the active CI option equals the LP’s \emph{loss-versus-rebalancing} cost. Since a CI option is perfectly hedgeable with a Black-Scholes type rebalancing porfolio~\cite{ciurlia2009note} and both the CPAMM and the CI option strip have the same delta, they share the same rebalancing portfolio, and hence the LVR is precisely the difference between the two instruments, \emph{i.e.} instantaneous funding fee.

  \item If the CFAMM’s delta and price boundaries are calibrated to match those of a single CI put, then the LVR over a future time window becomes nearly constant and equal to the funding fee $q$. Notably, this construction relies on implied volatility, rather than instantaneous volatility, meaning it can be formulated using observed market data. 
\end{enumerate}
The following section formalizes the CI decomposition, proving the equivalence between funding fees and LVR.
\begin{figure}[t]
  \centering
  \begin{subfigure}[b]{0.5\textwidth}
    \centering
    \includegraphics[width=\textwidth]{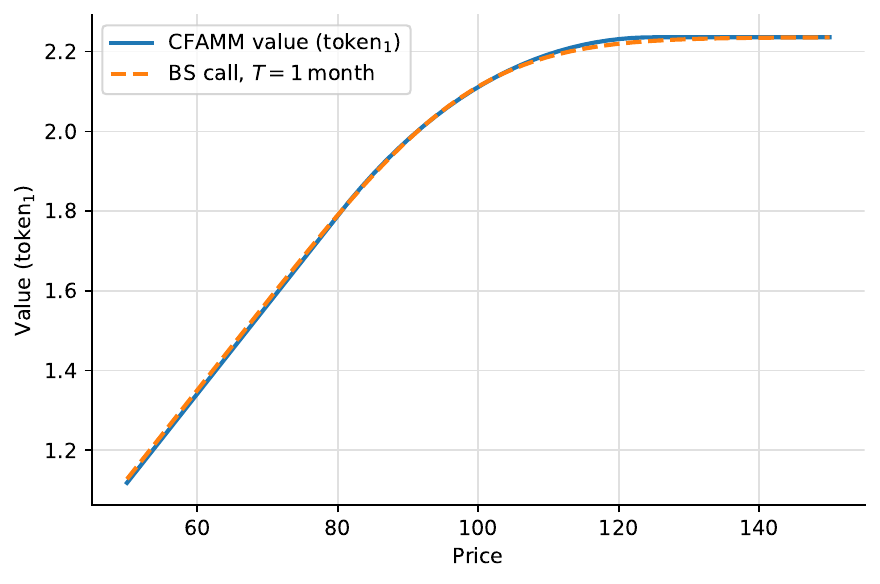}
    \caption{Value of a CPAMM band \([80,125]\) versus a short one-month Black–Scholes put plus cash.}
    \label{fig:cpamm-vs-call}
  \end{subfigure}
  \hfill
  \begin{subfigure}[b]{0.46\textwidth}
    \centering
    \includegraphics[width=\textwidth]{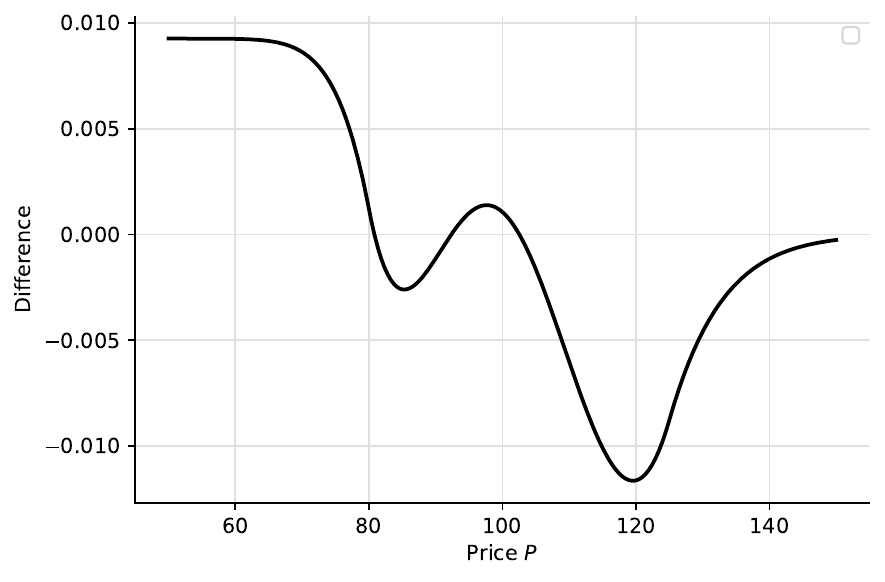}
    \caption{Difference between CPAMM value and short put plus cash.}
    \label{fig:diff-call-cpamm}
  \end{subfigure}
  \hfill
  \begin{subfigure}[b]{0.5\textwidth}
    \centering
    \includegraphics[width=\textwidth]{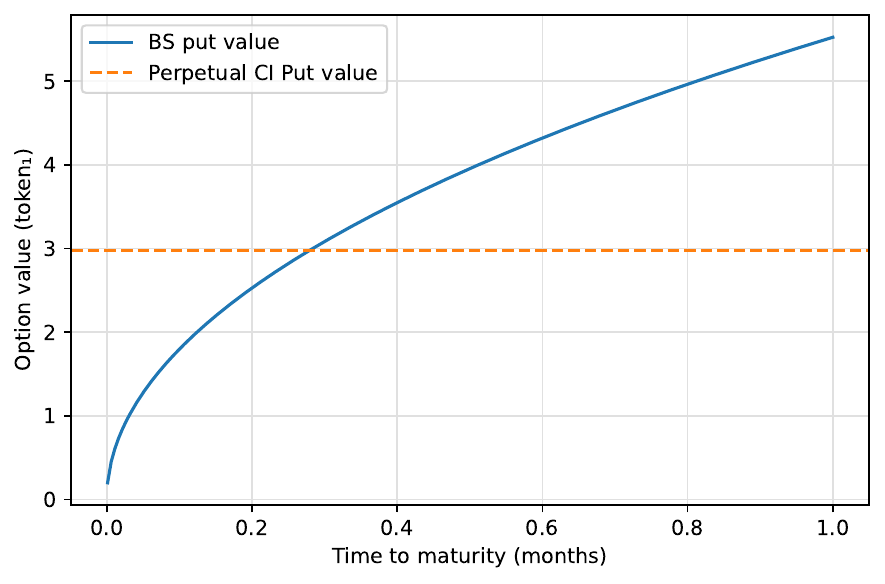}
  \caption{Time value decay of a European put versus the constant value of a perpetual CI call, whose funding fee offsets theta exactly.}
  \label{fig:theta}
  \end{subfigure}
  \caption{Comparison between CPAMM liquidity value and a portfolio of short put ($K{=}100$, $r{=}5\%$, $\sigma{=}50\%$) plus cash.}
  \label{fig:CPAMM-profile3}
\end{figure}

\section{Modeling a CFAMM Position with CI Options}\label{sec:kernel-replication}
\subsection{Overview}
In this section, we construct a portfolio whose delta (change in option price w.r.t. change in spot price) is the same as the delta of a CFAMM $X(S)$. This portfolio consists of a distribution of perpetual American CI put options in the limit $q\rightarrow\infty$ across a continuum of strike prices. As the funding rate tends to infinity, the exercise and dropping boundaries of each option collapse to the strike price, and the option's valuation converges to $\max(K-S, 0)$. Consequently, the option's delta converges to a step function. This property enables the construction of a portfolio that replicates the delta of an arbitrary (but smooth) CFAMM payoff function.
\subsection{Portfolio  Construction}
We begin with the following lemma:
\begin{lemma}\label{lem:boundry-collapse}
   In the limit $q\to\infty$, both the lower and upper boundary of a CI put converge to the strike $K$, and option's delta $X_{q}(S;K)$ becomes a step function $-\mathbf 1_{\{S<K\}}$.
\end{lemma}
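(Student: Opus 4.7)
The plan is to combine an analytical argument based on the closed forms in~\eqref{eq:DeltaCIput}--\eqref{eq:boundary_upper} with an economic sanity check grounded in the optimal-stopping representation of $P_q$. A key preparatory observation is that $\gamma_p$ solves the characteristic equation $\tfrac12\sigma^2\gamma(\gamma-1)+r\gamma-r=0$ of the homogeneous Black--Scholes operator in~\eqref{eq:CI-ODE}, which has no $q$-dependence; hence the entire dependence of the boundaries on $q$ is funnelled through the scalar $g(q)$ entering~\eqref{eq:boundary_lower} and~\eqref{eq:boundary_upper}.

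For the boundary collapse, I would take the expressions for $S_\ell$ and $S_u$ and observe that the prefactor $q/(r+\sigma^2/2)$ grows linearly in $q$. For $S_\ell(q)$ and $S_u(q)$ to remain finite (and in fact converge to $K$), the bracket factors $g-g^{1/\gamma_p}$ and $g^{1-1/\gamma_p}-1$ must therefore vanish at rate $1/q$. A first-order Taylor expansion around $g=1$ shows that both brackets vanish to the same leading order, from which one can read off $g(q)\to 1$ and extract the sub-leading corrections to $S_\ell(q)$ and $S_u(q)$. Matching constants against the appendix expressions then gives the joint limit $S_\ell(q), S_u(q)\to K$. An equivalent and conceptually cleaner route is the optimal-stopping formulation: writing $P_q(S;K)=\sup_{\tau\ge 0}\EE_S\bigl[e^{-r\tau}(K-S_\tau)^{+}-\int_0^\tau e^{-rs} q\,\dd s\bigr]$, the running cost term grows linearly in $q$ while the payoff is bounded, so for $q$ large enough the unique optimum is $\tau^{*}=0$; this forces the continuation region $(S_\ell,S_u)$ to collapse to $\{K\}$.

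Once boundary collapse is in hand, the delta convergence follows directly. Fix any $S\neq K$. For all $q$ large enough, $S$ lies outside $[S_\ell(q),S_u(q)]$: if $S<K$ then $S<S_\ell(q)$ and we are in the exercise region where $P_q(S;K)=K-S$, yielding $X_q(S;K)=-1$; if $S>K$ then $S>S_u(q)$ and we are in the drop region where $P_q(S;K)=0$, yielding $X_q(S;K)=0$. Combining these two cases gives pointwise convergence $X_q(S;K)\to-\mathbf 1_{\{S<K\}}$ for every $S\neq K$, which is the claimed limit.

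The hard part will be the algebraic bookkeeping in the boundary-collapse step, since the implicit dependence of $g$ on $q$ (through the appendix relations for $\alpha_p$, $\beta_p$, and $g$) must be expanded carefully enough to force the two boundary limits to coincide at the \emph{same} value $K$, not merely to show that each converges to some finite constant. The optimal-stopping shortcut bypasses this bookkeeping but requires invoking standard results on stopping problems with running costs, so presenting both arguments in tandem provides a useful robustness check and confirms that the closed-form algebra and the variational characterization agree.
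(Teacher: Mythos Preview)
Your boundary-collapse argument via Taylor expansion of the closed forms~\eqref{eq:boundary_lower}--\eqref{eq:boundary_upper} around $g=1$ is exactly the paper's route. One simplification you appear to have missed: the appendix gives $g = 1 + rK/q$ \emph{explicitly}, so there is no implicit system to unwind---the ``hard bookkeeping'' you anticipate is just a single-variable expansion in $\varepsilon := rK/q$, and first order already delivers both limits equal to $K$ (the paper carries the expansion to second order, but that extra precision is only needed later for Lemma~\ref{lem:funding-lvr-0}).

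For the delta step your argument differs slightly from the paper's: you work \emph{outside} the continuation band, using that for $q$ large and $S\neq K$ the point $S$ falls into either the exercise region ($P_q=K-S$, delta $-1$) or the drop region ($P_q=0$, delta $0$). The paper instead works \emph{inside} the band, invoking the smooth-fit conditions $X_q(S_\ell)=-1$, $X_q(S_u)=0$ together with monotonicity of $X_q$ on $(S_\ell,S_u)$. Both routes are correct and equally short; yours is arguably more direct since it avoids appealing to monotonicity of the delta. Your optimal-stopping supplement is a pleasant conceptual check not in the paper, though as you note it remains heuristic without a sharper argument pinning the collapse point to $K$ rather than merely showing the continuation region shrinks.
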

This lemma encapsulates the relationship between the funding rate and the option’s “effective” time to expiration. As the funding rate increases, the option behaves increasingly like a zero-time-to-expiry option, with its delta approaching a discontinuous step as $q\to\infty$.

Next, we specify the weight distribution of the options portfolio that delta-replicates a CFAMM value function $V(S)$.
\begin{theorem}\label{thm:CFAMM-option-portfolio}
Let $V:\RR_{>0}\to\RR$ be a twice continuously differentiable function. Assume that $V'\in L^1(\RR_{>0})$, $V'$ has bounded variation on $\RR_{>0}$ and that $\displaystyle \lim_{S\to\infty} V'(S) = 0$. Define the weight $w(K):=V''(K)$ and, for each
$q<\infty$,
\begin{align*}
      \Pi_q(S)&:=\int_{0}^{\infty} w(K)\,P_q(S;K)\,dK,
\end{align*}
and
\begin{align*}
  \Pi(S)&:=\lim_{q\to\infty}\Pi_q(S).
\end{align*}

\noindent Then the portfolio
\(
   \Pi(S) -V(S)
\)
is delta-neutral.
\end{theorem}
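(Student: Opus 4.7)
The strategy is to show that the spatial derivative of $\Pi(S)-V(S)$ vanishes identically on $\RR_{>0}$, which is equivalent to the portfolio being delta-neutral. Concretely, the aim is to establish $\Pi'(S)=V'(S)$; this reduces to computing the delta of the portfolio integral and then applying the fundamental theorem of calculus, using the tail condition $\lim_{K\to\infty}V'(K)=0$.

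First, working at finite $q$, I would justify differentiation under the integral sign in the definition of $\Pi_q(S)=\int_0^\infty w(K)\,P_q(S;K)\,dK$. The standard bound $|X_q(S;K)|\le 1$ on any put delta, combined with the bounded-variation hypothesis on $V'$ (which forces $w=V''\in L^1(\RR_{>0})$), supplies the integrable dominant $|V''(K)|$ needed to exchange $\partial_S$ and the $K$-integral, giving $\Pi_q'(S)=\int_0^\infty V''(K)\,X_q(S;K)\,dK$. Next I would pass to the limit $q\to\infty$. Lemma~\ref{lem:boundry-collapse} yields the pointwise convergence $X_q(S;K)\to -\mathbf{1}_{\{S<K\}}$ for $K\neq S$, and the same dominant $|V''|$ controls the integrand uniformly in $q$, so dominated convergence delivers
\begin{equation*}
\Pi'(S) \;=\; -\int_0^\infty V''(K)\,\mathbf{1}_{\{S<K\}}\,dK \;=\; -\int_S^\infty V''(K)\,dK.
\end{equation*}
By the fundamental theorem of calculus and the assumption $\lim_{K\to\infty}V'(K)=0$, this integral equals $-V'(S)$, so $\Pi'(S)=V'(S)$ and therefore $(\Pi-V)'\equiv 0$.

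The main obstacle I expect is the double use of dominated convergence, and in particular verifying that $\Pi(S)$ itself (not just its derivative) is a legitimate limit as $q\to\infty$: the pointwise limit $P_q(S;K)\to (K-S)^+$ grows linearly in $K$, so justifying convergence of $\int V''(K)\,P_q(S;K)\,dK$ would require tail control such as $K\,V''(K)\in L^1$ near infinity. A cleaner way around this is to interpret the claim as $\Pi'(S):=\lim_{q\to\infty}\Pi_q'(S)$, i.e.\ to work exclusively with the delta and avoid differentiating the value limit, since the theorem only asserts delta-neutrality. The hypotheses $V'\in L^1$, $V'$ of bounded variation, and $V'(K)\to 0$ are used precisely to make both dominated-convergence steps and the final telescoping integration rigorous through a common dominating function $|V''|\in L^1$.
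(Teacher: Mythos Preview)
Your proposal is correct and follows essentially the same route as the paper: differentiate $\Pi_q$ under the integral sign (Leibniz), pass $q\to\infty$ via dominated convergence using Lemma~\ref{lem:boundry-collapse} and the dominant $|V''|\in L^1$, and then integrate $V''$ back to $V'$ using the tail condition $V'(\infty)=0$. Your caveat that one should interpret $\Pi'(S)$ as $\lim_{q\to\infty}\Pi_q'(S)$ rather than differentiate the value-limit is well taken---the paper silently makes exactly this identification without separately justifying the interchange of $\lim_{q\to\infty}$ and $\partial_S$.
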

Thus, a perpetual CFAMM position that can be closed by its owner at any time can be perfectly modeled using a distribution of CI puts with very large funding rates\footnote{$\Pi_{q}(S_{t})$ has the same payoff as described above at all times. Therefore, the holder must continuously reissue options that are exercised or dropped.}. In practice, a continuous distribution is infeasible, and funding rates are finite. Therefore, one may replicate the delta profile using a discrete set of puts with different strikes. However, discretization and finite funding rates introduce non-negligible delta-replication error, dependent on inter-strike spacing and funding rate. We quantify this approximation error numerically for a constant product AMM position in Section~\ref{sec:error-analysis-delta}.

An additional corollary of the above result is the relationship between the instantaneous LVR of a liquidity position and the funding fees of options with strikes around the spot price (also referred to as activated strikes) in the replicating portfolio. This relationship is analyzed in the following section.


\section{Establishing LVR as Funding Fees}\label{sec:LVR-from-CI}
The portfolio $\Pi$ delta-replicates a given CFAMM liquidity position. However, unlike the AMM position, the short options portfolio pays a continuous funding fee to the seller---arising from the active CI puts with strike around the spot price. We show that these funding fees, absent in the AMM, are precisely equal to the LP’s LVR. 

To compute this running funding fee, we first establish the following lemma:

\begin{lemma}\label{lem:funding-lvr-0}
As $q\to\infty$, the product $q\left(S_{u}(q;K) - S_{\ell}(q;K)\right)$ converges to a finite limit:
\[
\lim_{q \to \infty} q \cdot \bigl(S_{u}(q;K) - S_{\ell}(q;K)\bigr)
= \frac{\sigma^{2} K^{2}}{2}.
\]
\end{lemma}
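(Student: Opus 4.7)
The plan is to reduce the asymptotic computation to an application of the fundamental theorem of calculus for $P_q''$ across the continuation region, with the leading-order behaviour dictated by the ODE~\eqref{eq:CI-ODE}. Throughout I abbreviate $S_u(q;K)$ and $S_\ell(q;K)$ as $S_u$ and $S_\ell$.

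First, I would evaluate~\eqref{eq:CI-ODE} at the two free boundaries: the value- and delta-matching conditions in~\eqref{eq:boundaries-FBP} collapse the $P_q$ and $P_q'$ terms to constants, giving
\[ P_q''(S_\ell) = \frac{2(q + rK)}{\sigma^2 S_\ell^2}, \qquad P_q''(S_u) = \frac{2q}{\sigma^2 S_u^2}, \]
both positive and of order $q$. Next, since $P_q\in C^2$ in the continuation region, the fundamental theorem of calculus yields
\[ \int_{S_\ell}^{S_u} P_q''(S)\,\dd S \;=\; P_q'(S_u) - P_q'(S_\ell) \;=\; 0-(-1) \;=\; 1. \]
Because $P_q''$ is continuous on $[S_\ell,S_u]$, the mean value theorem for integrals supplies a point $\xi_q\in(S_\ell,S_u)$ with $P_q''(\xi_q)(S_u - S_\ell) = 1$, i.e. $S_u - S_\ell = 1/P_q''(\xi_q)$.

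To finish, I would let $q\to\infty$. Lemma~\ref{lem:boundry-collapse} gives $S_\ell, S_u \to K$, so $\xi_q \to K$ by squeezing. Evaluating~\eqref{eq:CI-ODE} at $S=\xi_q$ and dividing by $q$,
\[ \frac{\sigma^2 \xi_q^2}{2}\cdot \frac{P_q''(\xi_q)}{q} + \frac{r\xi_q P_q'(\xi_q) - rP_q(\xi_q)}{q} = 1. \]
Convexity of $P_q$ in $S$ (a standard property of option prices, and visible directly from the closed form~\eqref{eq:Pq}) makes $P_q'$ monotone, so on $[S_\ell,S_u]$ one has $-1\le P_q'\le 0$ together with $0 \le P_q \le K - S_\ell \le K$. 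Thus the second summand on the left is $O(1/q)$, and passing to the limit gives $P_q''(\xi_q)/q \to 2/(\sigma^2 K^2)$, whence $q(S_u-S_\ell) = q/P_q''(\xi_q) \to \sigma^2 K^2/2$. The main obstacle I anticipate is the uniform boundedness of $P_q$ and $P_q'$ on the shrinking interval as $q\to\infty$; convexity of the CI put price is the cleanest vehicle for this, since without it one would have to argue via the explicit coefficient formulas in Appendix~\ref{app:closed-form}.
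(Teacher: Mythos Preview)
Your argument is correct and takes a genuinely different route from the paper. The paper works directly with the closed-form boundary expressions~\eqref{eq:boundary_lower}--\eqref{eq:boundary_upper}: writing $g=1+\varepsilon$ with $\varepsilon=rK/q$ and Taylor-expanding $g^{1/\gamma_p}$ and $g^{1-1/\gamma_p}$ to second order, the first-order contributions to $S_u-S_\ell$ cancel and the surviving $\varepsilon^2$ coefficient $\alpha(\alpha-1)$ (with $\alpha=1/\gamma_p$) simplifies to give
\[
S_u-S_\ell=\frac{\sigma^2K^2}{2q}+\mathcal O(q^{-2})
\]
explicitly. Your approach instead integrates $P_q''$ across the continuation band via the fundamental theorem of calculus, picks out a mean-value point $\xi_q$, and reads off the limit from the ODE~\eqref{eq:CI-ODE} once the lower-order $P_q$ and $P_q'$ terms are shown to be $O(1)$ by convexity.

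Your route is more structural: it uses only the ODE, the smooth-fit conditions~\eqref{eq:boundaries-FBP}, convexity of $P_q$, and Lemma~\ref{lem:boundry-collapse}, never touching the explicit boundary formulas. It would transfer to other installment-type free-boundary problems with the same smooth-pasting structure. The paper's expansion, by contrast, delivers a quantitative $\mathcal O(q^{-2})$ remainder and is self-contained in the sense that it does not appeal to Lemma~\ref{lem:boundry-collapse} (which is itself proved by the same expansion, so your argument still relies on that machinery indirectly). Two minor remarks: your opening computation of $P_q''(S_\ell)$ and $P_q''(S_u)$ is never used downstream and can be dropped; and the convexity you flag as the main obstacle is immediate from the closed form~\eqref{eq:Pq}, since $P_q''(S)=\beta_p\gamma_p(\gamma_p-1)S^{\gamma_p-2}$ has constant sign on $(0,\infty)$ and must be positive given $\int_{S_\ell}^{S_u}P_q''=1$.
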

Next, we express $\Pi$ as the limit of a discrete sum of options. In the following lemma, we construct such a discretization and show that, as $q \to \infty$, this portfolio converges pointwise to that of $\Pi$. We also compute the limiting funding fee contribution from the option whose holding region contains the spot price.

\begin{lemma}\label{lem:funding-lvr}
Fix $q>0$ and an interval $[a,b]\subset\RR_{>0}$. 
Define a sequence of strikes \( K_1 < K_2 < \dots < K_{N(q)} \) recursively by:
\[
  \begin{cases}
    S_\ell(q; K_1) = a, \\[4pt]
    S_\ell(q; K_{i+1}) = S_u(q; K_i), & i = 1, \dots, N(q)-1, \\[4pt]
    S_u(q; K_{N(q)}) = b.
  \end{cases}
\]
For each \( i \), define the weight
\[
   w_i := \int_{S_\ell(q; K_i)}^{S_u(q; K_i)} w(K)\,\mathrm{d}K
        = X\bigl(S_u(q; K_i)\bigr) - X\bigl(S_\ell(q; K_i)\bigr),
\]
so that
\[
   \sum_{i=1}^{N(q)} w_i = \int_a^b w(K)\,\mathrm{d}K = -1.
\]

Let \( K_j \) be the activated strike such that \( S_\ell(q; K_j) \le S_t \le S_u(q; K_j) \). Then, as \( q \to \infty \), the discrete portfolio 
\[
     \widetilde\Pi_{q}:=\sum_{i=1}^{N(q)}w_i\,P_q(\cdot;K_i)
\]
\noindent converges pointwise to the continuous payoff \( \Pi(S) \).
Moreover, the weighted funding fees at strike $K_j$ satisfies
\[
   \lim_{q \to \infty} w_j q = \frac{\sigma^{2} S^{2}}{2}X'(S).
\]
\end{lemma}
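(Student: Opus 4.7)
The plan is to treat the two assertions separately. For the pointwise convergence of $\widetilde\Pi_q$ to $\Pi$, I would express the discrete sum as a Riemann-type approximation of the integral defining $\Pi_q$, with a mesh that shrinks to zero as $q\to\infty$, and then close the loop using the definition $\Pi = \lim_{q\to\infty}\Pi_q$ via the triangle inequality. For the funding-fee limit, I would apply the mean value theorem to $X$ across the activated continuation interval and combine with Lemma~\ref{lem:funding-lvr-0}.

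For the Riemann-sum step, by construction the intervals $\{[S_\ell(q;K_i),S_u(q;K_i)]\}_{i=1}^{N(q)}$ tile $[a,b]$, and (as is standard for a concentrated-liquidity band) $w = V''$ is supported in $[a,b]$. Lemma~\ref{lem:boundry-collapse} implies that the partition mesh $\max_i(S_u(q;K_i)-S_\ell(q;K_i))$ tends to zero as $q\to\infty$. Using $w_i = \int_{S_\ell(q;K_i)}^{S_u(q;K_i)} w(K)\,\mathrm{d}K$, I can express
\begin{equation*}
\widetilde\Pi_q(S) - \Pi_q(S) = \sum_{i=1}^{N(q)} \int_{S_\ell(q;K_i)}^{S_u(q;K_i)} w(K)\bigl[P_q(S;K_i) - P_q(S;K)\bigr]\,\mathrm{d}K.
\end{equation*}
The closed-form expression~\eqref{eq:Pq} shows that $K \mapsto P_q(S;K)$ is smooth and, as $q\to\infty$, converges uniformly on compact sets to the limiting payoff $\max(K-S,0)$. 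Combined with the vanishing mesh, this forces the integrand to zero uniformly across subintervals, so $\widetilde\Pi_q(S) - \Pi_q(S) \to 0$. Since $\Pi_q(S) \to \Pi(S)$ by Theorem~\ref{thm:CFAMM-option-portfolio}, pointwise convergence follows by the triangle inequality.

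For the funding-fee limit, applying the mean value theorem to $X$ on $[S_\ell(q;K_j),S_u(q;K_j)]$ produces some $\xi_q$ in this interval with
\begin{equation*}
w_j q = X'(\xi_q)\cdot q\bigl(S_u(q;K_j) - S_\ell(q;K_j)\bigr).
\end{equation*}
Since $S$ lies in the activated interval for every $q$ and both boundaries collapse to $K_j$ by Lemma~\ref{lem:boundry-collapse}, the activated strike satisfies $K_j \to S$, and therefore $\xi_q \to S$ as well. Continuity of $X' = V''$ gives $X'(\xi_q) \to X'(S)$, while Lemma~\ref{lem:funding-lvr-0} combined with $K_j \to S$ yields $q(S_u(q;K_j) - S_\ell(q;K_j)) \to \sigma^{2} S^{2}/2$. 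Multiplying the two limits produces the target identity $\tfrac{\sigma^2 S^2}{2}X'(S)$.

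The principal obstacle I anticipate is the uniform-in-$q$ control in the Riemann-sum step: for each fixed $q$ the convergence of a Riemann sum is routine, but here both the partition and the integrand $P_q(S;\cdot)$ depend on $q$, so one must establish that the modulus of continuity of $K\mapsto P_q(S;K)$ does not degenerate as $q\to\infty$. I expect a quantitative Lipschitz bound to follow from the explicit expressions for $\alpha_p,\beta_p,\gamma_p$, but the bookkeeping is where the work lies.
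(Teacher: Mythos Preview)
Your funding-fee argument via the mean value theorem is essentially the paper's own: the paper writes out the same difference quotient $\frac{X(S_u)-X(S_\ell)}{S_u-S_\ell}$ and combines its limit $X'(K_j)$ with Lemma~\ref{lem:funding-lvr-0} and $K_j\to S$.

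For the pointwise-convergence half, the paper takes a genuinely different route. Rather than comparing $\widetilde\Pi_q$ to $\Pi_q$ at the level of option \emph{values}, it works at the level of \emph{deltas}: it notes that $\widetilde\Pi_q(b)=\Pi(b)=0$, bounds $\bigl|\partial_S\widetilde\Pi_q(S)\bigr|\le\sum_i|w_i|=1$ uniformly in $q$, uses Lemma~\ref{lem:boundry-collapse} to obtain $\partial_S\widetilde\Pi_q(S)\to X(S)$ pointwise, and then recovers $\widetilde\Pi_q(S)\to\Pi(S)$ by integrating $\widetilde\Pi_q(S)=-\int_S^b\partial_S\widetilde\Pi_q(u)\,du$ under dominated convergence. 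This sidesteps exactly the obstacle you flag---the uniform-in-$q$ modulus of continuity of $K\mapsto P_q(S;K)$---because the dominating bound $|X_q|\le1$ is immediate from the smooth-fit conditions~\eqref{eq:boundaries-FBP}. Your Riemann-sum route is also valid, but it does need that extra ingredient; the cleanest way to supply it is not the explicit $\alpha_p,\beta_p$ algebra you anticipate but the optimal-stopping representation of $P_q$, which gives $0\le\partial_K P_q(S;K)\le1$ uniformly in $q$ in one line and closes your estimate without any bookkeeping. The paper's delta-then-integrate approach is shorter precisely because its uniform bound comes for free.
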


Lemma~\ref{lem:funding-lvr} constructs a portfolio of CI puts with discrete strikes and finite funding rates, with weights chosen such that, in the limit $q \to \infty$, the portfolio converges to the CFAMM-replicating portfolio defined in Theorem~\eqref{thm:CFAMM-option-portfolio}. The index $j$ denotes the unique option that remains active (\emph{i.e.,} in the holding region), while all other options are either exercised or dropped. Therefore, the funding fee received by the portfolio owner is $w_j q$.

As $q$ becomes large, Lemmas~\ref{lem:funding-lvr-0} and~\ref{lem:funding-lvr} together imply that this funding fee converges to the instantaneous LVR of the corresponding CFAMM position. This leads to the following result:

\begin{theorem}[Funding fee $=$ LVR]\label{th:funding-lvr}
Let $\mathrm{d}\LVR_t$ denote the instantaneous change in the LVR of a CFAMM position, and let $\mathrm{d}\Fee_t$ denote the instantaneous funding income of its delta-replicating CI option portfolio $\Pi$. Then,
\[
\mathrm{d}\Fee_t = \mathrm{d}\LVR_t
\quad\text{and}\quad
\Fee|_{0}^{T} = \LVR|_{0}^{T}
\qquad (\forall\, T > 0).
\]
\end{theorem}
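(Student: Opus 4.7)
The plan is to recognize that the running funding income of the discrete portfolio built in Lemma~\ref{lem:funding-lvr} is carried entirely by the unique ``activated'' option at each instant, take $q\to\infty$, and identify the resulting rate with the instantaneous LVR from Eq.~\eqref{eq:dLVR}.

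First I would fix $q<\infty$ and consider the discrete portfolio $\wt\Pi_q$ from Lemma~\ref{lem:funding-lvr}. By construction the continuation regions $\bigl[S_\ell(q;K_i),S_u(q;K_i)\bigr]$ tile $[a,b]$, so for every spot $S_t\in(a,b)$ there is a unique index $j=j(t,q)$ with $S_\ell(q;K_j)\le S_t\le S_u(q;K_j)$. Since a CI put accrues its fee only while its spot lies in its continuation region, all other options in $\wt\Pi_q$ have been exercised or dropped, and the instantaneous funding flow reduces to
\[
 \mathrm{d}\Fee_t^{(q)} \;=\; w_{j(t,q)}\,q\,\mathrm{d}t.
\]

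Next I would send $q\to\infty$. Lemma~\ref{lem:funding-lvr-0} forces $S_u(q;K_j)-S_\ell(q;K_j)=O(1/q)$, hence $K_{j(t,q)}\to S_t$, and Lemma~\ref{lem:funding-lvr} then yields $w_{j(t,q)}q\to \tfrac12\sigma^{2}S_t^{2}X'(S_t)$ pointwise in $t$. Comparing with Eq.~\eqref{eq:dLVR} gives
\[
 \mathrm{d}\Fee_t \;:=\; \lim_{q\to\infty}\mathrm{d}\Fee_t^{(q)} \;=\; \tfrac12\sigma^{2}S_t^{2}X'(S_t)\,\mathrm{d}t \;=\; \mathrm{d}\LVR_t,
\]
which is the first asserted identity. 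Integrating over $[0,T]$ and passing the $q$-limit through the time integral via dominated convergence then delivers $\Fee|_0^T=\LVR|_0^T$; a deterministic dominator is available because $X'$ is bounded on the compact band $[a,b]$ and therefore $w_{j(t,q)}q$ is uniformly bounded in $q$ and $t$.

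The main obstacle I anticipate is supplying the uniform-in-$t$ control needed to legitimize this interchange: one must upgrade the pointwise convergence $w_{j(t,q)}q\to\tfrac12\sigma^{2}S_t^{2}X'(S_t)$ to something uniform on compact subsets of the continuation region, for example by expanding $X\bigl(S_u(q;K_j)\bigr)-X\bigl(S_\ell(q;K_j)\bigr)$ in a Taylor series around $K_j$ and tracking the next-order term through the asymptotics of Lemma~\ref{lem:funding-lvr-0}. A minor side issue is the Lebesgue-null set of times at which $S_t$ equals a partition point $S_u(q;K_i)=S_\ell(q;K_{i+1})$, where the activated index is momentarily ambiguous but the funding flow is unaffected.
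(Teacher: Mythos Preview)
Your proposal is correct and follows essentially the same route as the paper: identify the fee stream of the discrete portfolio from Lemma~\ref{lem:funding-lvr} as $w_{j}q\,\mathrm{d}t$, invoke that lemma's limit $w_jq\to\tfrac12\sigma^2S_t^2X'(S_t)$, match it to Eq.~\eqref{eq:dLVR}, and integrate. You are in fact more careful than the paper, which simply writes $\int_0^T\mathrm{d}\Fee_t=\int_0^T\mathrm{d}\LVR_t$ without discussing the interchange of the $q$-limit with the time integral or the partition-point ambiguity.
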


Therefore, the LVR of a liquidity position is precisely the CI funding premium of its delta-replicating options portfolio. Another way to look at this is that a CI option can be perfectly delta-hedged using a continuously rebalanced Black--Scholes-type portfolio of risky and stable assets~\cite{ciurlia2009note}. As the funding rate \( q \) increases, the delta of this rebalancing portfolio converges to that of a CFAMM position. However, unlike the CI option, a CFAMM position does not compensate the liquidity provider via a funding stream. The discrepancy between the CFAMM position and its hedge is therefore exactly the foregone CI funding. This equality relies solely on closed-form expressions and the principle of self-financing, without invoking any additional modeling assumptions.

One advantage of this option-theoretic interpretation is that CI fee rates, as implied by the options market, provide both real-time and forward-looking estimates of LVR, enabling informed range management. Moreover, this framework permits the construction of liquidity profiles with nearly price-path-independent LVR. These features are demonstrated in the sections that follow. A further implication is that a market for CI options allows for static-weight delta-hedging of CFAMM positions---under the assumption of constant implied volatility---eliminating the need for frequent rebalancing, unlike with conventional American or European options.


\subsection{CFAMM Position Replicating a Unit CI Option}
\label{subsec:CI-equality}

Consider a concentrated–liquidity CFAMM band whose delta, at every price level, matches the delta of a \emph{single} perpetual American CI \emph{put} option. Specifically, choose the liquidity bounds
$a<b$ such that
\begin{align}\label{eq:delta-match}
    X(S)\;&=\;V'(S)\;\equiv\;X_q\bigl(S;K_*\bigr)
   \qquad (S\in[a,b]),\\
    a &= S_l(q,K_*),\\
    b &= S_u(q,K_*).
\end{align}

for some strike $K_*$ and finite fee rate $q$.

\begin{theorem}\label{th:lvr-q}
    The instantaneous rate of change of the LVR of the above CFAMM position is approximately equal to the funding fees of the unit put option, up to a bounded approximation error. 
    That is, there exists a residual function $\epsilon(t)$ with $|\epsilon(t)| \le rK_{*}$, such that
    \[
     d\LVR_t = q\,\mathrm dt + \epsilon(t) dt,
    \]
\end{theorem}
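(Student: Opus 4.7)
The plan is to exploit the inhomogeneous Black--Scholes ODE \eqref{eq:CI-ODE} satisfied by $P_q(\cdot;K_*)$ on the continuation region $(a,b)=(S_\ell(q,K_*),S_u(q,K_*))$, together with the delta--matching hypothesis. The funding constant $q$ should emerge as the inhomogeneous term of that ODE, while the residual $\epsilon(t)$ will absorb the first- and zeroth-order (drift and discount) terms; these then need to be controlled uniformly using the free-boundary data in \eqref{eq:boundaries-FBP}.

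First, I would substitute the matching hypothesis into the LVR identity \eqref{eq:dLVR}. Differentiating $X(S)=X_q(S;K_*)$ in $S$ gives $X'(S)=\partial^{2}P_q/\partial S^{2}(S;K_*)$, so $d\LVR_t=\tfrac{1}{2}\sigma^{2}S_t^{2}\,P_q''(S_t;K_*)\,dt$. Next, I would rearrange \eqref{eq:CI-ODE} into the form $\tfrac{1}{2}\sigma^{2}S^{2}P_q''=q+rP_q-rSX_q$, valid throughout $(a,b)$. Plugging this in produces the claimed decomposition with residual $\epsilon(t)=r\bigl[P_q(S_t;K_*)-S_tX_q(S_t;K_*)\bigr]$.

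The hard part is bounding the function $f(S):=P_q(S;K_*)-SX_q(S;K_*)$ uniformly on $[a,b]$. I would use the value- and delta-matching conditions \eqref{eq:boundaries-FBP} to evaluate the endpoints: $f(a)=(K_*-a)-a\cdot(-1)=K_*$ and $f(b)=0-b\cdot 0=0$. Differentiating in $S$ gives $f'(S)=X_q-X_q-SX_q'=-S\,P_q''(S;K_*)\le 0$, because the put value is convex in $S$ (a property readable from the closed form \eqref{eq:Pq}, or argued abstractly from the fact that a put's delta is non-decreasing in spot on the continuation region). Hence $f$ is monotonically non-increasing on $[a,b]$ with $f(a)=K_*$ and $f(b)=0$, so its range is contained in $[0,K_*]$, which yields $|\epsilon(t)|=r|f(S_t)|\le rK_*$ as required. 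Since $q>rK_*$ by hypothesis, this simultaneously confirms that $d\LVR_t$ stays positive and close to $q\,dt$ throughout the holding period.
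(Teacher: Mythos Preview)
Your proposal is correct and follows essentially the same route as the paper's own proof: rearrange the Black--Scholes ODE \eqref{eq:CI-ODE} to isolate $\tfrac12\sigma^{2}S^{2}P_q''$, identify the residual as $\epsilon(t)=r\bigl[P_q(S_t;K_*)-S_tX_q(S_t;K_*)\bigr]$, and then show this function is monotone in $S$ (via $f'(S)=-SP_q''\le0$) with endpoint values $K_*$ and $0$ read off from the smooth-pasting conditions \eqref{eq:boundaries-FBP}. Your additional remark that $q>rK_*$ guarantees $d\LVR_t>0$ is a nice touch not made explicit in the paper.
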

Thus, the AMM liquidity position with the above delta profile suffers an \emph{almost flat, price-path-independent, volatility‑independent} LVR. Note that because $X_q(S;K_*)$ depends on the volatility parameter $\sigma$, as shown in Eq~\eqref{eq:DeltaCIput}, the calibrated boundaries $a,b$---and thus the entire delta curve $X(S)$---remain implicitly volatility-dependent. The residual error term, $\epsilon(\cdot)$, is bounded in magnitude by a constant and its relative magnitude is reported and discussed in Section \ref{sec:sigma-calibration-put}. The above liquidity profile is useful for LPs who want to estimate forward LVR and compare it with the \emph{expected future trading fees}. Lastly, constructing such an AMM profile requires estimates of future volatility. This can be approximated using a term structure of implied volatility gathered from the fixed-term options market. Section~\ref{sec:sigma-calibration-put} discusses this in detail and estimates the approximation error arising from the calibration between fixed-term and perpetual options' volatilities.

\section{Error Analysis of Discrete CI-Option Replication}
\label{sec:error-analysis-delta}

In this section we quantify the approximation error that arises when the
continuous–strike decomposition of a concentrated CFAMM is replaced by a
\emph{discrete} strip of perpetual American CI put options with \emph{finite} $q$.

\subsection{Sources of error}
We isolate two drivers of error: \emph{(i)} The installment rate \(q\) (large but finite), and \emph{(ii)} the inter‑strike spacing \(\Delta K\) of the discrete strip. For a given pair \((q,\Delta K)\), we construct a strip of discrete options and measure the absolute difference between the target delta (of the CFAMM) and the strip's delta. This is done over the active price band $S\in[a,b]$ and its maximum and root‑mean‑square values are plotted.

\subsection{Experimental methodology}
 We chose a concentrated liquidity AMM as our target CFAMM. Thus, the analytical delta
    \(X(S)=L(1/\sqrt S-1/\sqrt{b})\) for
    uniform liquidity on a price band $[a,b]$. We choose $a=80$, and $b=125$, and $L$ chosen so that
    $X(a)=1$, $X(b)=0$. We discretize the replication weight such that on each strike interval \([K_i,K_{i+1}]\) we set  
        \begin{equation}
         w_i
            = \int_{K_i}^{K_{i+1}}\!
              X'(K)\,\mathrm dK
            \;=\;
              X(K_{i+1})
              -X(K_{i}),
        \end{equation}
        ensuring that the discrete weights sum to the continuous
        integral. For each
    strike, we compute the short CI put delta, clipped to $\{-1,0\}$
    outside its continuation band. Let $X_{\text{strip}}(S; q, \Delta K) = \sum_i w_i\,X_q(S;K_i)$ be the delta of the strip at $S$. Define the error
    \[
  \varepsilon(S_j; q, \Delta K)
    = \bigl|X(S_j)
             - X_{\text{strip}}(S_j; q, \Delta K)\bigr|
\]

evaluated on grid \(\{S_j\}_{j=1}^N\) where \(N\!=\!2000\). We consider two error metrics:
    \begin{itemize}
      \item \emph{Maximum absolute error:}
        $\max_j\,\varepsilon(S_j; q,\Delta K)$.
      \item \emph{Root-mean-square error (RMSE):}
        $\sqrt{N^{-1}\sum_j\varepsilon(S_j;q,\Delta K)^2}.$
    \end{itemize}
We evaluate a parameter sweep
    $(q,\Delta K)\in\{8 ,16, 32, 64, 125,250,500,1000, 2000,4000\}\times\{0.25,0.5,1.0, 2.0, 4.0\}$.

\subsection{Results}
Figure~\ref{fig:error-logmax} shows the logarithm of maximum absolute error versus
\(q\) for five strike spacings. Similarly, Figure~\ref{fig:error-rmse} displays the RMSE versus \(q\) for the same strike spacings. Both errors increase strictly with strike spacing for a given $q$ across all installment rates. On the other hand, for a given strike spacing, both errors generally decrease with $q$ with some exceptions. For large strike spacing, $\Delta K = \{1,2,4\}$, the RMSE error increases with $q$ for large values, $q\ge 128$. Lastly, Figure~\ref{fig:band-fit} plots a representative delta curve
(\(q{=}250,\ \Delta K{=}2\)) against the CPAMM target delta.

\begin{figure}[t]
  \centering
  \begin{subfigure}[b]{0.45\textwidth}
    \centering
    \includegraphics[width=\textwidth]{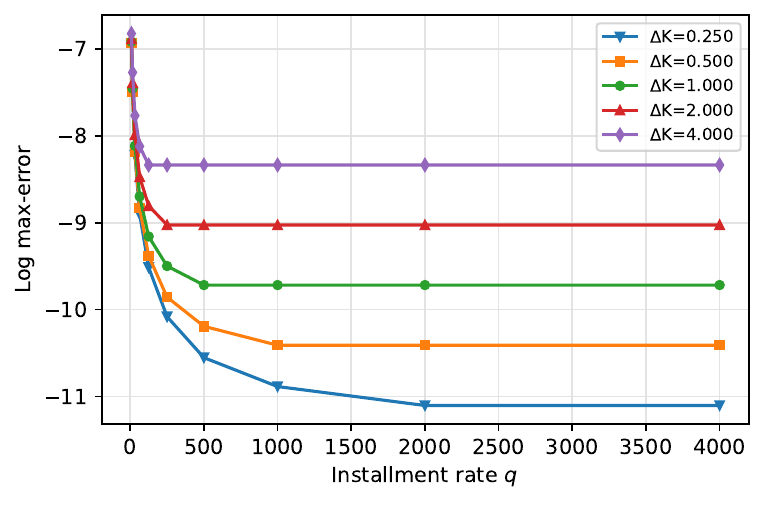}
    \caption{Log max error}
    \label{fig:error-logmax}
  \end{subfigure}
  \hfill
  \begin{subfigure}[b]{0.45\textwidth}
    \centering
    \includegraphics[width=\textwidth]{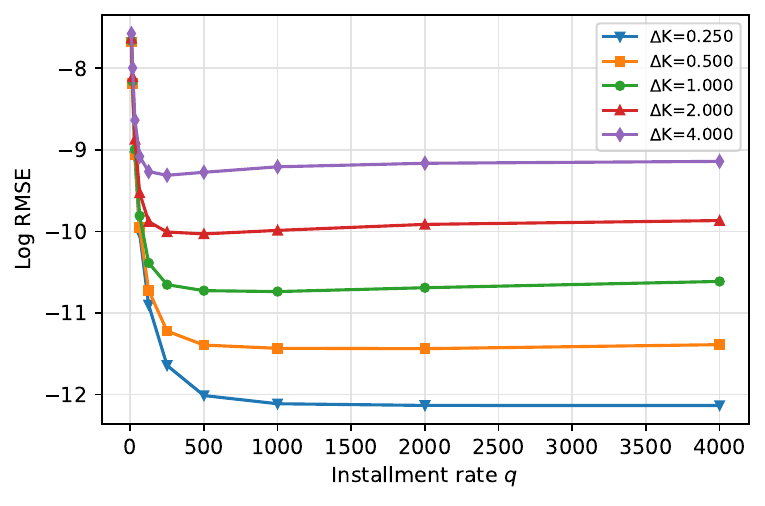}
    \caption{Log RMSE error}
    \label{fig:error-rmse}
  \end{subfigure}
    \hfill
  \begin{subfigure}[b]{0.55\textwidth}
    \centering
\includegraphics[width=\textwidth]{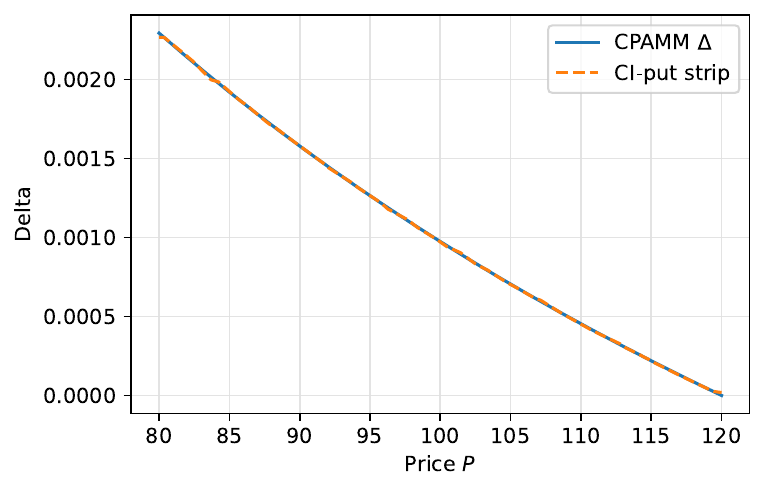}
  \caption{CFAMM target delta (solid) versus discrete CI‑put strip
           approximation (dashed) for \(q=250,\ \Delta K=2\).}
  \label{fig:band-fit}
  \end{subfigure}
  \caption{Log-Max absolute and RMSE delta-replication error versus installment rate $q$ under different parameter settings.}
  \label{fig:error-vs-q}
\end{figure}

\subsection{Discussion}
Both errors are below \(10^{-3}\) and shrink as expected: a larger \(q\) steepens each put’s delta,
while a finer \(\Delta K\) better resolves the continuous weight. The trade-off between the capital cost of a large $q$ and the operational cost of a finer strike mesh can be balanced according to the LP’s precision requirements.

\section{Volatility Calibration for Perpetual CI Put Options}
\label{sec:sigma-calibration-put}

Perpetual CI \emph{put} pricing requires a
\emph{constant} volatility parameter $\sigma$.  Market quotes instead
supply a term structure $\widehat\sigma(\tau)$ of implied volatility with time to expiration $\tau$. In the following, we derive the effective implied volatility $\sigma_{\mathrm{eff}}(q)$ for a perpetual American CI option with funding fee $q$ using the market-implied term structure of at-the-money options. 

\subsection{Effective time horizon}\label{sec:sigma–horizon-map}
For a CI put with rate~$q$, the continuation band is
$(S_{\ell}(q),S_{u}(q))$.

It stays alive as long
as the underlying spot price \(S_t\) remains within the continuation band. Otherwise, when $S_t = S_l$, the option is dropped, or when $S_t = S_u$, it is exercised by the holder. 

Define the \emph{first‑exit time}
\[
   \tau(q)\;:=\;
   \inf\!\bigl\{t>0 : S_t \notin (S_{\ell}(q),S_{u}(q))\bigr\},
\]
\emph{i.e.}\ the random horizon at which the CI position terminates.
In probabilistic terms, \(E[\tau(q)] = \bar \tau(q)\) is the \emph{mean first‑exit time} of a GBM between two absorbing boundaries. 

\begin{theorem}\label{th:tau-q-sol}
    The closed‑form solution of $\bar\tau(q)$ is
\begin{align}\label{eq:horizon-put}
    \bar\tau(q) = \begin{cases}
        \frac{1}{\sigma^2}\ln\left(\frac{S_0}{S_l(q)}\right)\ln\left(\frac{S_u(q)}{S_0}\right) & \text{if }a=0\\
        \frac{1}{a}\left[\ln \left(\frac{S_l(q)}{S_0}\right)+ \ln\left(\frac{S_l(q)}{S_u(q)}\right)\frac{S_0^{\kappa}-S_l(q)^{\kappa}}{S_l(q)^{\kappa}-S_u(q)^{\kappa}}\right] & \text{if }a \neq 0
    \end{cases} 
\end{align}
where $a = r - \frac{\sigma^2}{2}$ and $\kappa = -\frac{2a}{\sigma^2}$.
\end{theorem}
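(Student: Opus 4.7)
The plan is to reduce the first-exit problem for $S_t$ to a boundary value problem via the Kolmogorov backward equation, solve it explicitly in the driftless and drifted cases, and then translate back to the original price coordinates.

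First, I would switch to log-price coordinates. Applying Itô's lemma to $X_t := \ln S_t$ under the dynamics in Eq.~\eqref{eq:gbm} yields $dX_t = a\,dt + \sigma\,dB_t^{\QQ}$ with $a = r - \sigma^2/2$, and the continuation band $(S_\ell(q), S_u(q))$ becomes the interval $(x_\ell, x_u) := (\ln S_\ell(q), \ln S_u(q))$. Since $\ln$ is monotone, the first-exit time of $S_t$ equals the first-exit time of $X_t$ from $(x_\ell, x_u)$ starting at $x_0 := \ln S_0$.

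Second, I would set up the boundary value problem. Letting $u(x) := \EE[\tau(q) \mid X_0 = x]$, Dynkin's formula (equivalently, the Kolmogorov backward equation) implies that on $(x_\ell, x_u)$ the function $u$ solves
\[
\tfrac{1}{2}\sigma^2 u''(x) + a\,u'(x) = -1, \qquad u(x_\ell) = u(x_u) = 0.
\]

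Third, I would solve this ODE case by case. For $a = 0$, two integrations together with the boundary conditions give $u(x) = \sigma^{-2}(x - x_\ell)(x_u - x)$; evaluating at $x_0$ and rewriting $x_0 - x_\ell = \ln(S_0/S_\ell(q))$ and $x_u - x_0 = \ln(S_u(q)/S_0)$ reproduces the first branch of Eq.~\eqref{eq:horizon-put}. For $a \neq 0$, the ODE has particular solution $-x/a$ and a homogeneous space spanned by $1$ and $e^{\kappa x}$ with $\kappa = -2a/\sigma^2$, so the general solution is $u(x) = -x/a + A + B e^{\kappa x}$. Imposing the two vanishing conditions determines $A$ and $B$ linearly; substituting $x_0 = \ln S_0$ and $e^{\kappa x} = S^{\kappa}$ then yields the second branch.

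The main obstacle is bookkeeping: the linear system for $A$ and $B$ must be rearranged into precisely the fraction $\tfrac{S_0^{\kappa} - S_\ell(q)^{\kappa}}{S_\ell(q)^{\kappa} - S_u(q)^{\kappa}}$ with the sign conventions stated in the theorem, and care is needed because the formula superficially mixes $\ln(S_\ell/S_0)$ with $\ln(S_\ell/S_u)$ rather than the more symmetric forms that arise naturally from the $A,B$ solve. A brief L'Hôpital check of the drifted branch as $a\to 0$ (where the $O(1)$ terms cancel and the subleading $\kappa^2$ term reproduces the driftless answer) would confirm that the two branches are consistent across $r = \sigma^2/2$. Note that the derivation nowhere uses the explicit expressions for $S_\ell(q),S_u(q)$ from Eqs.~\eqref{eq:boundary_lower}--\eqref{eq:boundary_upper}, so the formula applies to any pair of absorbing levels bracketing $S_0$.
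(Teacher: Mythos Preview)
Your proposal is correct and follows essentially the same route as the paper: a log-price change of variable to arithmetic Brownian motion, Dynkin's formula to obtain the inhomogeneous second-order ODE $\tfrac{\sigma^2}{2}u''+a\,u'=-1$ with absorbing boundary conditions, and explicit integration in the $a=0$ and $a\neq 0$ cases before translating back via $e^{\kappa x}=S^{\kappa}$. Your added L'H\^opital consistency check and the observation that the formula is agnostic to the specific form of $S_\ell(q),S_u(q)$ are nice touches beyond what the paper records, but the core argument is identical.
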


Figure~\ref{fig:exit_time} plots the distribution of the first-exit time, $\tau$ for a CI put with $r=2\%$, $\sigma=67\%$, $S_0 = K=100$, and $q=5$. Here, $\bar\tau = 1.6$ months, $\sqrt{\Var(\tau)} = 0.11$, and $\EE[|\tau - \bar\tau|] = 0.08$.

\begin{figure}[t]
  \centering
  \includegraphics[width=0.6\textwidth]{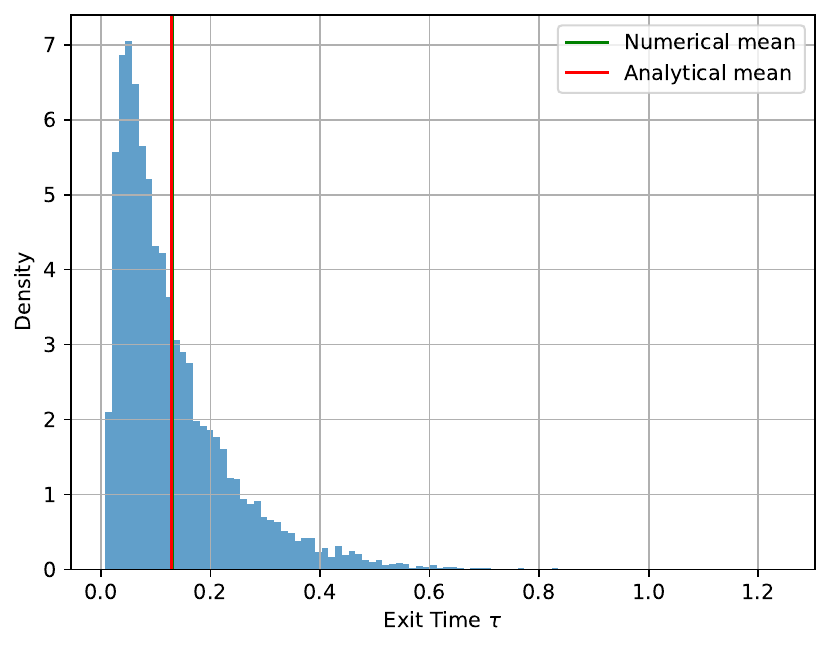}
  \caption{Distribution of first-exit time $\tau$ for $r=2\%$, $\sigma=67\%$, $K=100$, $q=5$.}
  \label{fig:exit_time}
\end{figure}
\begin{figure}[t]
  \centering
  \includegraphics[width=0.6\textwidth]{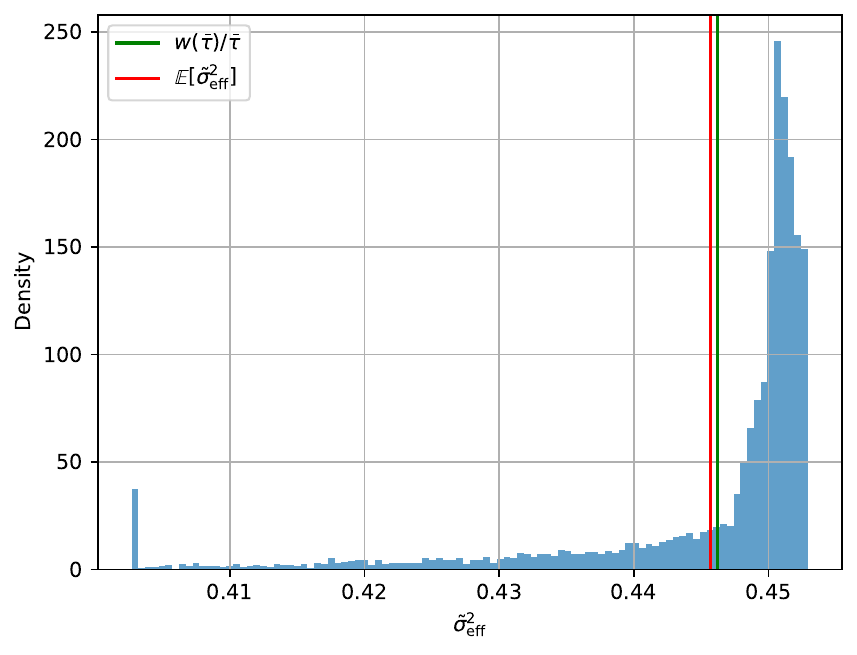}
  \caption{Distribution of $\sigma^2$ for $r=2\%$, $\sigma=61\%$, $K=100$, $q=5$ using ETH ATM IVs.}
  \label{fig:sigma_eff_eth}
\end{figure}

\subsection{Practical estimation from ATM IVs}
Given At-The-Money (ATM) implied volatilities $\hat \sigma(T)$ for fixed terms $T_1 < \ldots < T_n$ and a desired effective time horizon $\tau \in [T_i,T_{i+1})$, the squared constant volatility implied by the perpetual contract can be approximated by linearly interpolating the total variances ($T\hat\sigma^2(T)$) derived by the market implied volatilities:

\begin{align}
    \sigma_{\mathrm{eff}}^2 \tau \approx \tilde\sigma_{\mathrm{eff}}^2 \tau = \hat\sigma^2(T_i)T_i + \frac{\hat\sigma^2(T_{i+1})T_{i+1}-\hat\sigma^2(T_i)T_i}{T_{i+1}-T_i}(\tau - T_i) \equiv w(\tau)
\end{align}
The effective squared volatility can be estimated ex ante for a desired first-exit time distribution:
\begin{align}  
\label{eq:sigma_eff}
\EE[\tilde\sigma_{\mathrm{eff}}^2] = \EE\left[\frac{w(\tau)}{\tau}\right] \approx \frac{w(\bar\tau)}{\bar\tau}
\end{align}
When $\bar\tau$ is a function of $\sigma^2_{\mathrm{eff}}$, the problem becomes a fixed-point equation. Specifically, Eq.~\eqref{eq:sigma_eff} must be solved for $\sigma^2_{\mathrm{eff}}$ as a function of itself: $\sigma^2_{\mathrm{eff}} = \frac{w(\bar\tau(\sigma^2_{\mathrm{eff}}))}{\bar\tau(\sigma^2_{\mathrm{eff}})}$.

\begin{theorem}\label{th:error-bounds} 
    The estimate $\tilde\sigma_{\mathrm{eff}}^2\approx \frac{w(\bar\tau)}{\bar\tau}$ yields root mean squared error and mean absolute deviation
    \begin{align}
        RMSE & \leq M\sqrt{\Var(\tau)}\\
        MAD & \leq M \EE[|\tau-\bar\tau|]
    \end{align}
    where $M = \max_i \sup_{\tau \in [T_i,T_{i+1})} \left|\frac{d}{d\tau}\left(\frac{w(\tau)}{\tau}\right)\right|$.
\end{theorem}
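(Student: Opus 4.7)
The plan is to treat the approximation as the pointwise replacement of a random variable by its value at the mean argument. Define $f(u) := w(u)/u$, so that $\tilde\sigma_{\mathrm{eff}}^2 = f(\tau)$ and the proposed estimate is $f(\bar\tau)$. Then $\mathrm{RMSE} = \sqrt{\EE[(f(\tau)-f(\bar\tau))^2]}$ and $\mathrm{MAD} = \EE[|f(\tau)-f(\bar\tau)|]$ are moment norms of the single pathwise discrepancy $|f(\tau)-f(\bar\tau)|$, so the strategy is to dominate this discrepancy pathwise by a linear function of $|\tau-\bar\tau|$ and then take the appropriate moments.

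First I would establish that $f$ is globally Lipschitz on the tenor span with constant $M$. Since $w$ is piecewise affine (the linear interpolant of the total variances $T_i\hat\sigma^2(T_i)$), $f$ is continuous everywhere on the domain and continuously differentiable on each open piece $(T_i,T_{i+1})$, with $|f'(u)|\le M$ wherever defined by the very definition of $M$. To upgrade this piecewise bound into a global Lipschitz bound, I would use a subdivision argument: for any $u_1<u_2$ in the domain, partition $[u_1,u_2]$ at the breakpoints $T_i$ lying inside it, apply the mean value theorem on each smooth sub-interval, and use continuity of $f$ at the joins to telescope the local estimates into
\[
|f(u_2)-f(u_1)|\;\le\;M\,(u_2-u_1).
\]
Setting $(u_1,u_2) = (\min(\tau,\bar\tau),\max(\tau,\bar\tau))$ yields the pathwise inequality $|f(\tau)-f(\bar\tau)|\le M\,|\tau-\bar\tau|$.

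Finally I would take moments. Taking expectations of the pathwise bound immediately gives
\[
\mathrm{MAD}\;=\;\EE[|f(\tau)-f(\bar\tau)|]\;\le\;M\,\EE[|\tau-\bar\tau|].
\]
Squaring the pathwise bound before taking expectation, and using $\bar\tau = \EE[\tau]$ so that $\EE[(\tau-\bar\tau)^2]=\Var(\tau)$, gives
\[
\mathrm{RMSE}^2\;\le\;M^2\,\EE[(\tau-\bar\tau)^2]\;=\;M^2\,\Var(\tau),
\]
and taking square roots delivers the stated inequality.

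The main obstacle is the handling of the kinks of $w$ at the tenor grid points, where $f$ is only piecewise $C^1$; this is precisely why the definition of $M$ takes a supremum within each piece and then a maximum over pieces, rather than a global supremum of one derivative. The subdivision argument above converts this piecewise smoothness into a single global Lipschitz constant without inflating it. A minor technical point is that the first-exit time $\tau$ may fall outside $[T_1,T_n]$; in practice one extrapolates the total variance flatly beyond the tenor grid, which extends $f$ continuously with the same constant $M$, so the argument carries through unchanged.
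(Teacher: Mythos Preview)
Your proposal is correct and follows essentially the same route as the paper: define $f(u)=w(u)/u$, establish the Lipschitz bound $|f(\tau)-f(\bar\tau)|\le M|\tau-\bar\tau|$, and then take expectations (squared and absolute) to obtain the RMSE and MAD inequalities. If anything, your subdivision argument across the tenor breakpoints is more careful than the paper's, which simply asserts Lipschitz continuity of $f$ before applying the same pathwise bound.
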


Hence, when the total variance derived from market-implied volatilities (which are approximately linear in log-Moneyness $\log(K/S)$) has a small slope, the approximation error is small. Figure~\ref{fig:sigma_eff_eth} plots the distribution of $\tilde\sigma^2_{\mathrm{eff}}$, its mean, and the approximation $\frac{w(\bar\tau)}{\bar\tau}$ (for the same CI put as in Figure~\ref{fig:exit_time})
using ETH ATM 7-day, 30-day, 90-day, and 180-day IVs.
Figure~\ref{fig:sigma_eff_error} plots the $RMSE$ and $MAD$ (as a percentage of $\frac{w(\bar\tau)}{\bar\tau}$) for the approximation $\tilde\sigma^2_\mathrm{eff} \approx \frac{w(\bar\tau)}{\bar\tau}$ for the period of Jan 2024-Feb 2024.

\begin{figure}[t]
  \centering
  \includegraphics[width=0.6\textwidth]{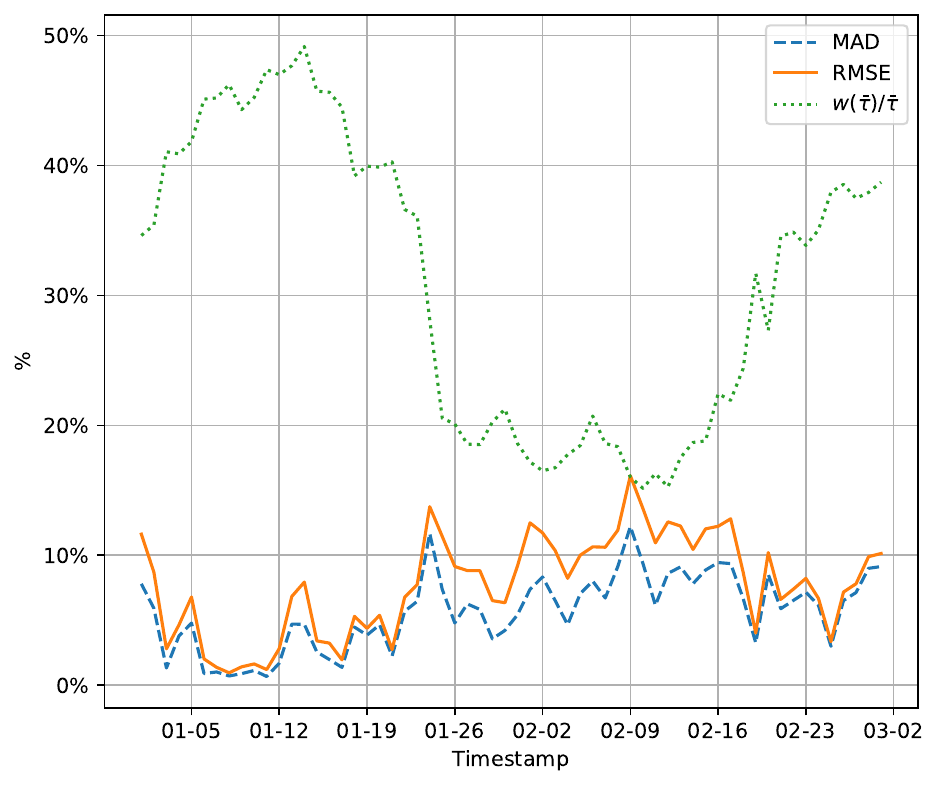}
  \caption{MAD and RMSE for $r=5\%$, $K=100$, $q=40$, where $\sigma_\mathrm{eff}^2$ is derived by fixed point methods. IV data is from ETH ATM IVs for Jan-Feb 2024.}
  \label{fig:sigma_eff_error}
\end{figure}

\begin{table}[t]
\centering
\caption{Funding fee $q$, resulting CFAMM band $(S_l,S_u)$, and residual bound $rK$, for $r=0.05$, for desired $\bar\tau$ under varying $\sigma_{\mathrm{eff}}$ exposures.}
\label{tab:ci-put-band}
\begin{tabular}{@{}ccccccc@{}}
\toprule
& & \multicolumn{4}{c}{\% of $K$} & \% of q \\ \cline{3-6}
$\bar\tau$ & $\sigma_{\mathrm{eff}}$ & $q$ (token$_1$/yr) &
$S_l(q)$ & $S_u(q)$ & Width & rK \\ \midrule

1 d & 60\% & 284\% & 97\% & 103\% & 6\% & 2\% \\
1 d & 80\% & 380\% & 96\% & 104\% & 8\% & 1\% \\
1 d & 100\% & 475\% & 95\% & 105\% & 10\% & 1\% \\
\midrule
1 wk & 60\% & 106\% & 92\% & 109\% & 17\% & 5\% \\
1 wk & 80\% & 142\% & 90\% & 112\% & 22\% & 4\% \\
1 wk & 100\% & 178\% & 87\% & 115\% & 28\% & 3\% \\
\midrule
2 wk & 60\% & 74\% & 89\% & 113\% & 24\% & 7\% \\
2 wk & 80\% & 99\% & 86\% & 118\% & 32\% & 5\% \\
2 wk & 100\% & 125\% & 83\% & 123\% & 40\% & 4\% \\
\midrule
1 mo & 60\% & 49\% & 85\% & 120\% & 35\% & 10\% \\
1 mo & 80\% & 66\% & 80\% & 128\% & 47\% & 8\% \\
1 mo & 100\% & 84\% & 76\% & 136\% & 60\% & 6\% \\
\midrule
2 mo & 60\% & 34\% & 79\% & 130\% & 51\% & 15\% \\
2 mo & 80\% & 46\% & 74\% & 142\% & 69\% & 11\% \\
2 mo & 100\% & 58\% & 68\% & 157\% & 88\% & 9\% \\
\bottomrule
\end{tabular}
\end{table}

\begin{figure}[h!]
  \centering
  \includegraphics[width=0.65\textwidth]{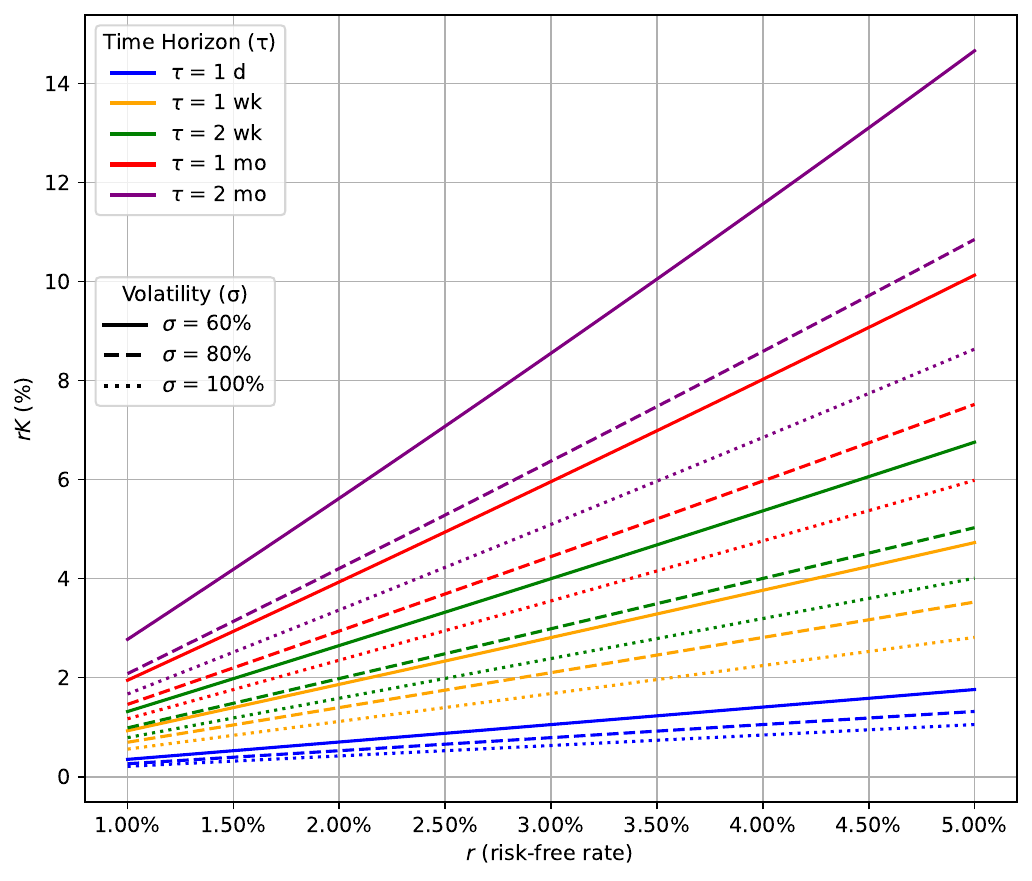}
  \caption{Residual upper bound $rK$ as a percentage of $q$ for various time horizons and $\sigma_\mathrm{eff}$ exposures.}
  \label{fig:rK-percent-q}
\end{figure}
\subsection{Interpretation for Liquidity Providers}

The volatility calibration framework enables liquidity providers (LPs) to estimate future loss-versus-rebalancing (LVR) of a concentrated AMM position using observable option market information. By associating the funding fee \( q \) of a perpetual CI put with its expected lifetime \( \bar\tau(q) \), and mapping this to market-implied volatilities, LPs can extract an estimate for the effective squared volatility \( \tilde\sigma^2_{\text{eff}} \) that governs the dynamics of the position and its underlying asset.

As the funding rate \( q \) increases, the continuation band \( [S_\ell(q), S_u(q)] \) narrows, leading to shorter expected lifetimes \( \tau(q) \) for the CI put. Conversely, smaller \( q \) implies wider bands and longer-lived options. Since market-implied volatilities are typically flatter at long durations, we can observe that:
\begin{itemize}
  \item \textbf{Short durations} (high \( q \)) correspond to low maturity implied volatilities, where the IV curve is more curved and error-prone. More fine-grained market data is required here for better estimates.
  \item \textbf{Long durations} (low \( q \)) correspond to long-dated IVs, where the volatility curve is typically flatter. The estimate $\tilde\sigma_{\mathrm{eff}}^2 = \frac{w(\tau)}{\tau}$ is then less sensitive to the exact value of $\tau$, so $M$ is small and the approximation of the mean is more robust to variation in $\tau$. Long durations also tend to have a tighter concentration of realised $\tilde\sigma^2_{\mathrm{eff}}$ around its mean for the same reason, meaning the mean is a good ex ante estimate.
\end{itemize}
If a liquidity band is chosen to replicate the delta of a single perpetual CI put---using the squared volatility estimate---then the LP incurs a predictable LVR almost equal to the funding rate \( q \). This transforms an otherwise stochastic adverse-selection cost into a predictable fixed cost per unit time, simplifying the LP’s decision-making. For instance, given a desired expected time-horizon, an LP can estimate the effective term volatility, which in turn informs the liquidity band selection. Table~\ref{tab:ci-put-band} shows the different band widths corresponding to expected time horizons and effective term volatilities and Figure~\ref{fig:rK-percent-q} shows the $\LVR-q$ residual term upper bound. Conversely, given a liquidity band, an LP can estimate the position's expected effective time horizon using numerical fixed point methods. Combined with the results of Section~\ref{sec:LVR-from-CI}, the effective term volatility estimation provides a practical framework for LPs to choose bands that realize a desired holding period \( \tau(q) \) and predictable LVR, or to estimate expected forward-looking LVR for a desired liquidity band.

\section{Conclusion}\label{sec:conclusion}

This work introduces a novel decomposition of a concentrated CFAMM position into a continuum of perpetual American continuous-installment (CI) put options, offering the first closed-form equivalence between the funding mechanics of perpetual American CI options and the loss-versus-rebalancing cost faced by concentrated AMM liquidity providers, providing as option-theoretic interpretation of LVR. 

By exploiting the limiting behavior of CI option valuations as the installment rate grows large, the paper constructs delta-replicating portfolios that match the AMM exposure exactly. The analysis shows that the funding income from this replicating portfolio, absent in the AMM, is analytically equal to the LVR cost borne by the LP in the AMM. This time-invariant correspondence permits a forward-looking estimation of LVR and provides actionable design rules for selecting position width and shape.

Beyond theoretical insight, the framework yields practical tools for LPs. The discrete error analysis confirms that a small collection of finite-$q$ CI puts suffices to replicate AMM delta within tight error bounds, making implementation of the replicating portfolio feasible for LPs wishing to immunize against LVR. Crucially, the analysis also shows that if a liquidity band is chosen to replicate the delta of a single perpetual CI put, the LP incurs a predictable LVR equal approximately to the funding rate $q$. This converts a stochastic adverse-selection cost into a predictable fixed cost per unit time, simplifying LP decision-making when it comes to position shape and width: using market-implied volatility curves, LPs can calibrate the position shape, width, and implied volatility to a desired expected holding period and LVR, or conversely, estimate a position’s effective time horizon and LVR given a liquidity band.

This framework opens several avenues for further research. While the present analysis focuses only on liquidity bands centered around the current price (ATM), LPs may, in practice, deploy liquidity asymmetrically about the spot price to expose their positions to greater or lesser volatility. A resulting mismatch of spot and the replicated strike requires extended analysis to capture the entire volatility surface, as opposed to only considering the ATM volatility curve. Alternatively, if on-chain markets for CI options were developed, they could serve as direct hedging instruments and sources of IV data. Furthermore, the Black-Scholes model assumes the underlying asset experiences a constant volatility, which is not supported by market data. In reality, volatilities may be time-dependent or even stochastic. Models like the Heston model or Hull-White model, which extend upon Black-Scholes, may be applied here for analysis under dynamic volatility surfaces and to assess the model's sensitivity to deviations from constant volatility assumptions. Finally, the model assumes continuous-time trading and perfect liquidity. Future work will relax these assumptions to quantify the impact of transaction costs, slippage, and gas fees on the CI funding fee-LVR equivalence.

\bibliographystyle{plain} 
\bibliography{refs} 

\appendix
\section{Closed-Form of Perpetual American CI Put Option}\label{app:closed-form}
Below, we provide closed-form expressions for the perpetual American CI put option.

\noindent Define

\begin{align}
  \gamma_p&:=-\frac{2r}{\sigma^{2}},\\ 
  g&:=1+\frac{rK}{q}.
\end{align}
Then, the option price is

\begin{align*}
  P_q(S;K)&=\alpha_pS+\beta_pS^{\gamma_p}+\frac{q}{r}
\end{align*}
where the constants $\alpha_p$ and $\beta_p$ are given by
\begin{align}
 \alpha_p&=\bigl(g^{1-1/\gamma_p}-1\bigr)^{-1}, \\
  \beta_p&=-\frac{1}{\gamma_p}\Bigl(\tfrac{q}{r+\sigma^{2}/2}\Bigr)^{1-\gamma_p}\alpha_p^{\gamma_p}.
\end{align}
Let $X_q(S;K)= \frac{\partial}{\partial S} P_q(S;K)$ be the delta of put value, its expression is given by
\begin{align*}
  X_q(S;K)=\alpha_p+\beta_p\gamma_pS^{\gamma_p-1}.
\end{align*}
Lastly, the lower and upper boundaries, $S_{\ell}$ and $S_u$ respectively, have the following closed-form:
\begin{align*}
  S_\ell &= \frac{q}{r+\sigma^{2}/2} \bigl[g - g^{1/\gamma_p}\bigr], \label{eq:boundary_lower} \\
  S_u    &= \frac{q}{r+\sigma^{2}/2} \bigl[g^{1-1/\gamma_p} - 1\bigr].      
\end{align*}

\section{Proofs of Main Theorems}
\begin{proof}[Proof of Lemma~\ref{lem:boundry-collapse}]
\textbf{Boundary collapse.}
Let \(S_\ell(q)\) and \(S_u(q)\) be the exercise and abandonment
boundaries in~\eqref{eq:boundary_lower} and~\eqref{eq:boundary_upper}. We will prove that
\[
   \lim_{q\to\infty}S_\ell(q)=
   \lim_{q\to\infty}S_u(q)=K.
\]
Define \(\varepsilon := rK/q\), so \(g=1+\varepsilon\) and
\(\varepsilon\downarrow 0\) as \(q\uparrow\infty\).
Consider the second‑order expansion
\((1+\varepsilon)^{a}=1+a\varepsilon+\tfrac12a(a-1)\varepsilon^{2}
        +\mathcal O(\varepsilon^{3})\).
Applying it to the two exponents in~\eqref{eq:boundary_lower} and~\eqref{eq:boundary_upper} yields
\begin{align*}
     S_\ell(q)&=
    \frac{q\varepsilon}{\,r+\sigma^{2}/2\,}
    \Bigl(1-\tfrac1{\gamma_p}\Bigr)
    +\mathcal O\!\bigl(q^{-1}\bigr),\\
  S_u(q)&=
    \frac{q\varepsilon}{\,r+\sigma^{2}/2\,}
    \Bigl(1-\tfrac1{\gamma_p}\Bigr)
    +\mathcal O\!\bigl(q^{-1}\bigr).
\end{align*}

\noindent Because \(q\varepsilon=rK\) and
\(1-\tfrac1{\gamma_p}=1+\sigma^{2}/(2r)\),
both leading terms equal \(K\). Therefore,
\begin{align*}
    \lim_{q\to\infty}S_\ell(q)&=\lim_{q\to\infty} K+\mathcal O\!\bigl(q^{-1}\bigr)\\ &= K,\\
    \lim_{q\to\infty}S_u(q)&=\lim_{q\to\infty} K+\mathcal O\!\bigl(q^{-1}\bigr) \\&= K.
\end{align*}

\noindent\textbf{Step-delta limit.} The smooth--fit conditions (continuous first order derivative) on the boundaries of the holding region of the option valuation curve from Section~\ref{sec:ode-options} give $X_{q}(S_\ell;K)=-1$ and $X_{q}(S_u;K)=0$.  Because $X_{q}$ is monotone increasing in $S$ between the two boundaries and the interval $S_u(q)-S_\ell(q)$ collapses, we have the \emph{pointwise} limit
\begin{equation}\label{eq:stepdelta}
X_{\infty}(S;K):=\lim_{q\to\infty}X_{q}(S;K)\;=
-\mathbf 1_{\{S<K\}}.
\end{equation}
Thus, as the funding fees of a continuous-installment put becomes large, it transforms into a unit-step-delta contract.
\end{proof}

\begin{proof}[Proof of Theorem~\ref{thm:CFAMM-option-portfolio}]
For each $K$, the map $S\mapsto P_q(S; K)$ is continuously differentiable. Moreover, for fixed $q$, there exists a constant $c_q > 0$ such that
\(
  \bigl|\partial_S P_q(S;K)\bigr|
  \le c_q(1+K)^{-2}|w(K)|.
\)
Hence, the integrand is point-wise dominated by an
$L^{1}$–function of $K$. Leibniz’s rule~\cite{folland1999real} yields
\[
  \partial_S\Pi_q(S)
   =\int_{0}^{\infty} w(K)\,X_q(S;K)\,dK,
\]
where $X_q:=\partial_S P_q$ is the CI–put delta. For every $K$, we have
\begin{align*}
 \lim_{q\to \infty} X_q(S;K)&=X_\infty(S;K)\\
 &=-\mathbf 1_{\{S<K\}}   
\end{align*}
from Lemma~\ref{lem:boundry-collapse}.  
Let $\bs{q_n}$ be a sequence such that $q_n\to\infty$.
$|X_{q_n}(S;K)|\le1$ for all $n$, $S$ and $K$, and $X_{q_n}(S;K) \to X_\infty(S;K)$ pointwise in $K$, so dominated convergence theorem~\cite{folland1999real} applies:
\begin{align*}
    \partial_S\Pi(S)
  &= \lim_{n\to\infty} \int_0^\infty w(K)X_{q_n}(S;K)dK \\
    &=\int_{0}^{\infty} w(K)\,X_\infty(S;K)\,dK\\
    &=-\int_{S}^{\infty} w(K)\,dK\\
    &=V'(S)-V'(\infty)\\
    &=V'(S).
\end{align*}
Therefore, \(\Pi(S)-V(S)\) is delta-neutral.
\end{proof}

\begin{proof}[Proof of Lemma~\ref{lem:funding-lvr-0}]
Using closed forms for a CI put boundaries from Appendix~\ref{app:closed-form}, $g:=1+\varepsilon$, $\varepsilon=rK/q$ and
$\gamma_p=-2r/\sigma^{2}$:
\begin{align*}
  S_{\ell}(q) &= \frac{q}{r+\sigma^{2}/2}\bigl[g-g^{1/\gamma_p}\bigr],\\
  S_{u}(q) &= \frac{q}{r+\sigma^{2}/2}\bigl[g^{1-1/\gamma_p}-1\bigr].
\end{align*}
Setting $\alpha:=1/\gamma_p=-\sigma^{2}/(2r)$ and expanding to second order:
\begin{align*}
 g^{\alpha} &= 1+\alpha\varepsilon+\tfrac12\alpha(\alpha-1)\varepsilon^{2}+\mathcal O(\varepsilon^{3}),\\
 g^{1-\alpha}      &= 1+(1-\alpha)\varepsilon-\tfrac12 (1-\alpha)\alpha\varepsilon^{2}+\mathcal O(\varepsilon^{3}).
\end{align*}
Substituting the above into the closed forms of $S_{\ell}$ and $S_{u}$ cancels the first‑order terms and the second‑order coefficient becomes
$\alpha(\alpha-1)$. Therefore, one obtains
\begin{equation}\label{eq:width-explicit}
  S_{u}(q)-S_{\ell}(q) =
   \frac{\alpha(\alpha-1)r^{2}K^{2}}{(r+\sigma^{2}/2)\,q}
   +\mathcal O\bigl(q^{-2}\bigr).
\end{equation}
Hence, 
\begin{align*}
    \lim_{q\to\infty} q\,\bigl(S_{u}(q)-S_{\ell}(q)\bigr) &=  \lim_{q\to\infty}\bigl[\frac{\alpha(\alpha-1)r^{2}K^{2}}{(r+\sigma^{2}/2)}
   +\mathcal O\bigl(q^{-1}\bigr)\bigr]\\
   &= \frac{\sigma^2K^2}{2}
\end{align*}
\end{proof}

\begin{proof}[Proof of Lemma~\ref{lem:funding-lvr}]
\textbf{Point‑wise convergence.}
Both $\widetilde\Pi_{q}$ and $\Pi$ \emph{vanish for all $S\ge b$}. Moreover, the CI–put delta satisfies
$0\le |X_q(S;K)|\le1$. Consequently
\(
  |\partial_{S}\widetilde\Pi_{q}(S)|
  =\bigl|\sum_i w_i\,X_q(S;K_i)\bigr|
  \le\sum_i |w_i| = 1
\)
for every $q$ and $S$. Let $i^\star=i^\star(S,q)$ be the (unique) index with
$S\in [S_\ell(q;K_{i^\star}),S_u(q;K_{i^\star})]$.
By Lemma~\eqref{lem:boundry-collapse}, $X_q(S;K_{i^\star})\to-\mathbf 1_{\{S<K_{i^\star}\}}$,
while $X_q(S;K_i)\to0$ for $i\neq i^\star$.  Dominated convergence
therefore gives
\[
   \lim_{q\to\infty}\partial_{S}\widetilde\Pi_{q}(S)
   =-\int_a^b w(K)\,\mathbf 1_{\{S<K\}}\,dK
   =X(S).
\]  
For any $S\le b$
\[
  \widetilde\Pi_{q}(S)=\widetilde\Pi_{q}(b)-\!\!\int_{S}^{b}\!
               \partial_{S}\widetilde\Pi_{q}(u)\,du.
\]
Because $\widetilde\Pi_{q}(b)=0$ for all $q$ and the integrands
converge point-wise while being uniformly bounded by 1, dominated
convergence implies $\widetilde\Pi_{q}(S)\to\Pi(S)$.

\noindent\textbf{Limit of $w_jq$.} From the statement of the Lemma, 
\[
w_j  \;=\;
            X\!\bigl(S_u(q;K_j)\bigr)
            -X\!\bigl(S_\ell(q;K_j)\bigr)
\]
Therefore,
\begin{align*}
    \lim_{q\to\infty} w_jq &=  \lim_{q\to\infty} q \left[X\!\bigl(S_u(q;K_j)\bigr)
            -X\!\bigl(S_\ell(q;K_j)\bigr)\right]\\
           &=  \lim_{q\to\infty} q(S_u(q;K_j) -S_\ell(q;K_j)) \frac{X\!\bigl(S_u(q;K_j)\bigr)
            -X\!\bigl(S_\ell(q;K_j)\bigr)}{(S_u(q;K_j) -S_\ell(q;K_j))}\\
            &= \lim_{q\to\infty} q(S_u(q;K_j) -S_\ell(q;K_j)) \lim_{q\to\infty}\frac{X\!\bigl(S_u(q;K_j)\bigr)
            -X\!\bigl(S_\ell(q;K_j)\bigr)}{(S_u(q;K_j) -S_\ell(q;K_j))}\\
            &= \lim_{q\to\infty} q(S_u(q;K_j) -S_\ell(q;K_j)) X'(K_{j})\\
            &= \frac{\sigma^2S_{t}^2}{2}X'(S_{t})
\end{align*}
\end{proof}

\begin{proof}[Proof of Theorem~\ref{th:funding-lvr}]
From Lemma~\ref{lem:funding-lvr}, we can approximate the continuous portfolio by a discrete portfolio of CI puts. For a specific funding fee $q$, at each time $t$, there is only one active option $j$ with weight $w_j$, price $P_q(S_t, K_j)$. The total funding fee accured over $[t,t+dt]$ is:
\[\dd\Fee_t^q = w_j q dt\]
Again from Lemma~\ref{lem:funding-lvr}, we have $\displaystyle \lim_{q\to\infty} w_jq = \frac{\sigma^2 S_t^2}{2}X'(S_t)$. This implies that
\[\lim_{q\to\infty} \mathrm{d}\Fee_t^q = \lim_{q\to\infty} w_j q dt = \frac{\sigma^2 S_t^2}{2}X'(S_t) dt\]
This exactly matches Eq.~\eqref{eq:dLVR}. Hence,
\[\dd\Fee_t = \dd\mathrm{LVR}_t\]
Integrating over $t\in [0,T]$, we get:
\[\Fee_t|_0^T = \int_0^T \dd\Fee_t = \int_0^T \dd\LVR_t = \LVR|_0^T\]
\end{proof}

\begin{proof}[Proof of Theorem~\ref{th:lvr-q}]
Equation~\eqref{eq:CI-ODE} gives
\begin{align*}
      \frac{1}{2}\,\sigma^{2}S^{2}\,
     \frac{\partial^{2}P_q}{\partial S^{2}}
  + rS\,
     \frac{\partial P_q}{\partial S}
  - r\,P_q
  \;=\; q,
  \qquad S\in(S_\ell,S_u).
\end{align*}
and
\begin{align*}
      \mathrm{d}\LVR_{t} &= \frac{1}{2}\,\sigma^{2}S^{2}\,
     \frac{\partial^{2}P_q}{\partial S^{2}} dt\\
     &= qdt-r(S\,
     \frac{\partial P_q}{\partial S} - P_q)dt
\end{align*}
Therefore, the residual term $\epsilon(t) = -r(S\,
     \frac{\partial P_q}{\partial S}- P_q)$. Note that we use $S$ instead of $S_t$ for brevity. $\epsilon(\cdot)$ is a function of $t$.

\noindent\textbf{Bounding $|\epsilon(t)|$.} In the region $S\in(S_\ell,S_u)$,
\[
\frac{\partial \epsilon(t)}{\partial S} = -rS\frac{\partial^2P_q}{\partial S^2}
\]
Since $\frac{\partial^2P_q}{\partial S^2}$ is always non-negative, 
\[
\frac{\partial \epsilon(t)}{\partial S} = -rS\frac{\partial^2P_q}{\partial S^2} \le0
\]
Therefore, $\epsilon(t)$ is monotonically decreasing with $S$. Moreover, when evaluated at $S=S_{\ell}$,
\begin{align*}
    \epsilon(t) &= -r((-1\cdot S_{\ell}) -(K_{*}-S_{\ell}))\\
    &= rK_{*}.
\end{align*}

and when $S=S_u$,
\[
\epsilon(t) = 0.
\]
Hence, $|\epsilon(t)|\le rK_{*}$.
\end{proof}

\section{Effective Time Horizon $\bar\tau(q)$}\label{sec:effective-horizon-proof}

\begin{proof}[Proof of Theorem~\ref{th:tau-q-sol}]
Given a continuation band,
\begin{equation}\label{eq:continuation-band}
   S_\ell(q) < S_t < S_u(q), \qquad  t \ge 0 ,
\end{equation}
and an initial spot price $S_0$, we formulate the problem as a GBM first-exit problem.

Under the risk‑neutral measure $\mathbb{Q}$ the spot follows a geometric
Brownian motion (GBM)
\begin{equation}\label{eq:gbm-riskneutral}
   dS_t = r S_t\,dt + \sigma S_t\,dB^{\mathbb{Q}}_t, \qquad
   S_0 \in\bigl(S_\ell(q),S_u(q)\bigr),
\end{equation}
where $\{B^{\mathbb{Q}}_t\}_{t\ge0}$ is a Wiener process.  Setting
$Y_t:=\log S_t$ converts \eqref{eq:gbm-riskneutral} into an \emph{arithmetic}
Brownian motion
\begin{equation}\label{eq:abm}
   dY_t = a\,dt + \sigma\,dB^{\mathbb{Q}}_t, \qquad
   a := r-\tfrac12\sigma^{2},
\end{equation}
absorbed at the logarithmic levels
\[
   L:=\log S_\ell(q), \qquad U:=\log S_u(q).
\]
The time at which the CI position terminates is the \emph{first‑exit time}
\begin{equation}\label{eq:def-tau}
   \tau(q)\;:=\;
   \inf\bigl\{t>0:Y_t\notin\bigl(L,U\bigr)\bigr\}
           \;=\;
   \inf\bigl\{t>0:S_t\notin\bigl(S_\ell(q),S_u(q)\bigr)\bigr\}.
\end{equation}

\subsection{Boundary Value ODE Problem}
Let $m(y) \equiv \EE_y[\tau(q)] \equiv \bar\tau(q)$ be the expectation of the time to absorption conditional on
$Y_0=y\in\bigl(L,U\bigr)$. By Dynkin's formula~\cite{ksendal2003stochastic}, 
\begin{align*}
    \EE_y[m(Y_\tau)] = m(y) + \EE_y\left[\int_0^\tau Am(Y_s) ds\right]
\end{align*}
\noindent At the exit time, $\tau$, the process has left $(L,U)$, so $m(Y_\tau) = 0$, and
\begin{align*}
    \EE_y[\tau] = m(y) = -\EE_y\left[\int_0^\tau Am(Y_s) ds\right]
\end{align*}
We can write $\tau$ as the integral of $1$ over $[0,\tau]$ so that
\begin{align*}
    \EE_y\left[\int_0^\tau 1 ds\right] = -\EE_y\left[\int_0^\tau Am(Y_s)ds\right] 
\end{align*}
which implies that $Am(y) = -1$. 

The generator $A$ of the time-homogeneous Brownian motion $Y_t$ also satisfies $Af(y) = a f'(y) + \frac{\sigma^2}{2}f''(y)$, yielding the boundary‑value problem:
\begin{equation}\label{eq:bvp}
   \frac{\sigma^{2}}{2}\,m''(y)+a\,m'(y)=-1,\qquad
   m\bigl(L\bigr)=m\bigl(U\bigr)=0 .
\end{equation}

\subsection{Closed‑form solution for critical drift $r=\tfrac12\sigma^{2}$}
When $r=\tfrac12\sigma^{2}$, $a=0$ in \eqref{eq:abm}, and the ODE becomes:
\begin{equation}\label{eq:critbvp}
    m''(y) = -\frac{2}{\sigma^2}
\end{equation}
The solution to \eqref{eq:critbvp} is of the form $m(y) = -\frac{1}{\sigma^2}y^2 + C_1y + C_2$. Applying the boundary conditions, $m(L) = m(U) = 0$, we have the system of equations:
\begin{align*}
    \begin{cases}
        \frac{1}{\sigma^2}L^2 = C_1L + C_2\\
    \frac{1}{\sigma^2}U^2 = C_1U + C_2
    \end{cases}
\end{align*}
Solving this yields
\begin{align*}
    C_1 &= \frac{1}{\sigma^2}(L+U)\\
    C_2 &= -\frac{LU}{\sigma^2}
\end{align*}
Finally, plugging in $L=\ln S_l(q)$, $U = \ln S_u(q)$, and $y = \ln S_0$, the solution to \eqref{eq:critbvp} is
\begin{equation}\label{eq:tau-final}
      \bar\tau(q)=\frac{1}{\sigma^2}\ln\left(\frac{S_0}{S_l(q)}\right)\ln\left(\frac{S_u(q)}{S_0}\right)
\end{equation}

\subsection{Closed-form solution for $r\neq\tfrac12\sigma^{2}$}
In this case, let $\kappa = -\frac{2a}{\sigma^2}$.
The general solution to the homogeneous ODE corresponding to \eqref{eq:bvp} is
\begin{equation*}
    m_h(y) = \frac{C_1}{\kappa}e^{\kappa y} + C_2
\end{equation*}
for some constants $C_1$ and $C_2$.

Let $m_p(y) = -\frac{y}{a} + C_3$, for some constant $C_3$, so that $m_p''(y) = 0$ and $m_p'(y) = -\frac{1}{a}$. $m_p$ is a particular solution to the ODE. Then, the general solution to the non-homogeneous ODE \eqref{eq:bvp} is
\begin{equation}
    m(y) = m_h(y)+m_p(y) = \frac{C_1}{\kappa}e^{\kappa y} + C_2 - \frac{y}{a} + C_3
\end{equation}
To simplify notation, let $A:=\frac{C_1}{\kappa}$ and $B:=C_2 + C_3$.
Applying the boundary conditions, $m(L) = m(U) = 0$, we have the system of equations:
\begin{align}
\begin{cases}
    Ae^{\kappa L} + B = \frac{L}{a}\\
    Ae^{\kappa U} + B = \frac{U}{a}
\end{cases}
\end{align}
Solving this yields
\begin{align}
    A & = \frac{1}{a} \frac{L-U}{e^{\kappa L}-e^{\kappa U}}\\
    B & = \frac{1}{a} \left(L - \frac{L-U}{e^{\kappa L}-e^{\kappa U}}e^{\kappa L}\right)
\end{align}
Finally, plugging in $L=\ln S_l(q)$, $U = \ln S_u(q)$, and $y = \ln S_0$, the solution to \eqref{eq:bvp} is
\begin{align}
    \bar\tau(q) = \frac{1}{a}\left[\ln \left(\frac{S_l(q)}{S_0}\right)+ \ln\left(\frac{S_l(q)}{S_u(q)}\right)\frac{S_0^{\kappa}-S_l(q)^{\kappa}}{S_l(q)^{\kappa}-S_u(q)^{\kappa}}\right]
\end{align}
\end{proof}

\section{Volatility Estimation Error Bounds}
\label{sec:error-bounds}
\begin{proof}[Proof of Theorem~\ref{th:error-bounds}]
The root mean squared error of the estimate $\tilde\sigma^2_{\mathrm{eff}} \approx \frac{w(\bar\tau)}{\bar\tau}$ is 
\begin{align*}
    RMSE = \sqrt{\EE\left[\left(\tilde\sigma_{\mathrm{eff}}^2 - \frac{w(\bar\tau)}{\bar\tau}\right)^2\right]}
\end{align*}
Let $f(\tau) = \tilde\sigma_{\mathrm{eff}}^2 = \frac{w(\tau)}{\tau}$, which is differentiable almost everywhere, with 
\begin{align*}
    f'(\tau) = \frac{m_i\tau - w(\tau)}{\tau^2} \qquad m_i \equiv \frac{\hat\sigma^2(T_{i+1})T_{i+1}-\hat\sigma^2(T_i)T_i}{T_{i+1}-T_i} \qquad f(0) = m_0
\end{align*}
$f$ is Lipschitz continuous, so let $M = \max_i \sup_{\tau \in [T_i,T_{i+1})} |f'(\tau)|$. Then, $|f(\tau) - f(\bar\tau)| \leq M|\tau - \bar\tau|$ and
\begin{align}
    RMSE = \sqrt{\EE[(f(\tau) - f(\bar\tau))^2]} \leq \sqrt{\EE[M^2(\tau - \bar\tau)^2]} = \sqrt{M^2\Var(\tau)} = M\sqrt{\Var(\tau)}
\end{align}

Similarly, the mean absolute deviation is
\begin{align}
    MAD = \EE\left[\left|f(\tau) - f(\tilde\tau)\right|\right] \leq M \EE[|\tau-\bar\tau|]
\end{align}
\end{proof}

\end{document}